\newtheorem{theorem}{Theorem}
\theoremstyle{definition}
\theoremstyle{definition}
\newtheorem{thm}{Theorem}[section]
\newtheorem{lemma}[thm]{Lemma}
\newtheorem{definition}[thm]{Definition}
\definecolor{graphPurple}{RGB}{185,75,185}
\begin{document}
\markboth{Sunil Vittal}
{Efficient circuits for leaf-separable state preparation}

\title{Efficient circuits for leaf-separable state preparation}

\author{Sunil Vittal}
\affiliation{
	Department of Mathematics\\ Princeton University \\ Princeton, NJ, 08544}

\author{Anthony Wilkie}
\affiliation{
	Department of Industrial and Systems Engineering\\ University of Tennessee Knoxville\\Knoxville, TN 37996}

\author{Nika Rastegari}
\affiliation{
	Department of Mathematics\\ California State University, Northridge \\ Northridge, CA, 91330}

\author{Mostafa Atallah}
\affiliation{
	Department of Industrial and Systems Engineering\\ University of Tennessee Knoxville\\Knoxville, TN 37996}
\affiliation{Department of Physics, Faculty of Science \\ Cairo University \\ Giza 12613, Egypt}

\author{Rebekah Herrman}\thanks{corresponding author}
\email{rherrma2@utk.edu}
\affiliation{
	Department of Industrial and Systems Engineering\\ University of Tennessee Knoxville\\Knoxville, TN 37996}

\begin{abstract}

Efficient state preparation is a challenging and important problem in quantum computing. In this work, we present a recursive state preparation algorithm that combines logarithmic-depth Dicke state circuits with Hamming weight encoders for efficiently preparing ``leaf-separable" quantum states. 
The algorithm is built on binary partition trees, generalized weight distribution blocks (gWDBs), and leaf-level encoders. We evaluate the performance of the algorithm by 
numerically simulating it on randomly generated target states with between 4 and 15 qubits. 
Compared to general state preparation approaches which require $O(2^n)$ CX gates, our algorithm achieves a circuit depth of $O(k\log\frac{n}{k} + 2^k)$ and uses $O(n(k+2^k))$ two-qubit gates, where $k < n$ denotes the subtree size. 
We also compare implementations of the algorithm with and without the use of ancilla qubits, providing a detailed analysis of the trade-offs in circuit depth and two-qubit gate counts. These results contribute to scalable state preparation for quantum algorithms that require structured inputs such as Dicke or near-Dicke states.

\end{abstract}

\maketitle

\section{Introduction}\label{sec: Introduction}

While quantum computers have the potential to solve classically intractable problems \cite{shaydulin2024evidence, daley2022practical, bravyi2018quantum}, creating the initial quantum state input for quantum algorithms remains a challenging task \cite{cerezo2022challenges}. General deterministic state preparation algorithms have CX gate scaling $O(2^n)$ where $n$ is the number of qubits \cite{Mottonen_2005TransformOfQuantStsUsingUnifContRot, Shende_2006SynthOfQuantLogCirc, Plesch_2011QSPWithUniversalGateDecomp}. This is prohibitive on near-term quantum devices, as controlled gates are particularly prone to error. There are general state preparation approaches that attempt to overcome this exponential scaling using ancilla qubits, computationally expensive classical preprocessing routines, quantum phase estimation, converting hard to prepare states to matrix product states, divide-and-conquer approaches, or genetic algorithms  \cite{Araujo_2021ADivAndConqAlgoForQSP, Gui_2024_Spacetime-Eff, Plesch_2011QSPWithUniversalGateDecomp, lacroix2020symmetry, aktar2022divide, kerppo2025minimizing, creevey2023gasp}. Furthermore, there are specialized sparse state preparation algorithms such as \cite{gleinig2021efficient, gonzales2025efficient, gaidai2025decomposition, de2022double, zhang2022quantum, Gui_2024_Spacetime-Eff}.

Recent work has shown that there are more efficient deterministic state preparation methods for particular quantum states such as highly symmetric states \cite{gard2020efficient, bond2025global}, GHZ states \cite{cruz2019efficient}, and Dicke states \cite{bartschi2019deterministic, bartschi2022short, PhysRevApplied.23.044014, aktar2022divide, stockton2004deterministic, yuan2025depth}, which can be prepared with logarithmic depth circuits and have applications in fields such as quantum networking \cite{prevedel2009experimental}. Further recent work has provided algorithms that prepare arbitrary Hamming-weight $k$ states, that is arbitrary quantum states that are comprised of computational basis states all of which have $k$ qubits that are ``1", with few ancilla qubits \cite{luo2025optimal, li2025preparation}.

In this work, we leverage logarithmic-depth Dicke state preparation circuits and Hamming weight encoders to develop a new algorithm that creates what we call ``leaf-separable" states, defined in Sec.~\ref{sec:algorithm}. The algorithm has $O(k\log\frac{n}{k} + 2^k)$ depth and requires $O(n(k+2^k))$ two-qubit gates where $k < n$ dictates the size of the tree needed to create the target state (see Sec.~\ref{sec: Background} for more details). The algorithm can be used to generate initial states for algorithms that require Dicke states as input or quantum states that are close to Dicke states, such as quantum algorithms for topological data analysis \cite{berry2024analyzing}, variational quantum algorithms for solving combinatorial optimization problems including portfolio optimization \cite{bartschi2020grover, brandhofer2022benchmarking, scursulim2025multiclass}, and preparing quantum many body scar states \cite{gustafson2023preparing}. 

This work is organized as follows. First, we briefly review sparse Dicke state preparation methods and Hamming weight encoders in Sec.~\ref{sec: Background}. Then, in Sec.~\ref{sec:algorithm}, we present the new leaf-separable state preparation algorithm. In Sec.~\ref{sec:experiments}, we simulate the algorithm on randomly generated states on between 4 and 15 qubits. Finally, we discuss future research directions in Sec.~\ref{sec: Discussion}.

\section{Background}\label{sec: Background}
In this section, we describe Dicke state preparation algorithms that will be used in the leaf-separable state preparation algorithm or inspired subroutines in it.

\subsection{Dicke State Preparation}\label{sec:dickestate}
The \textit{Hamming weight} of a binary string $z$ is the number of bits that have value ``1", and is denoted $h(z)$. The weight $k$ Dicke states on $n$ qubits is the uniform superposition of Hamming weight $k$ bitstrings of length $n$, that is, $\ket{D_k^n} = \binom{n}{k}^{-\frac{1}{2}}\sum_{z:h(z) = k}\ket{z}$. 
The first result from \cite{bartschi2019deterministic} proves the existence of a unitary that can map a state of the form $\ket{0^{n-\ell}1^\ell}$ to the state $\ket{D_\ell^n}$.
\begin{lemma}\cite{bartschi2019deterministic}
    There exists a unitary $U_{n,k}$ such that for all $0\leq\ell\leq k$, $U_{n,k}\ket{0^{n-\ell}1^{\ell}} = \ket{D_{\ell}^n}$, the weight $\ell$ Dicke state on $n$ qubits. This unitary can be implemented in $O(n)$ depth with $O(kn)$ $2$-qubit gates with no ancilla qubits. 
\end{lemma}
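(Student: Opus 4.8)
The plan is to build $U_{n,k}$ recursively by peeling off one qubit at a time, using the standard amplitude recursion for Dicke states. Splitting off the last qubit, one has
\[
\ket{D_\ell^n} = \sqrt{\tfrac{n-\ell}{n}}\,\ket{D_\ell^{n-1}}\ket{0} + \sqrt{\tfrac{\ell}{n}}\,\ket{D_{\ell-1}^{n-1}}\ket{1},
\]
which follows by partitioning the weight-$\ell$ strings of length $n$ according to their last bit: a fraction $\binom{n-1}{\ell}/\binom{n}{\ell} = (n-\ell)/n$ end in $0$ and $\binom{n-1}{\ell-1}/\binom{n}{\ell} = \ell/n$ end in $1$. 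I would therefore write $U_{n,k} = (U_{n-1,k}\otimes I)\,SCS_{n,k}$, where $SCS_{n,k}$ (a \emph{split-and-cyclic-shift} block) acts on all $n$ qubits and $U_{n-1,k}$ recursively prepares Dicke states on the first $n-1$ qubits.

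The second step is to pin down and implement the block $SCS_{n,k}$. On the relevant inputs $\ket{0^{n-\ell}1^\ell}$ (for $1\le\ell\le k$) it must realize
\[
\ket{0^{n-\ell}1^\ell}\ \longmapsto\ \sqrt{\tfrac{\ell}{n}}\,\ket{0^{n-\ell}1^\ell} + \sqrt{\tfrac{n-\ell}{n}}\,\ket{0^{n-\ell-1}1^\ell 0},
\]
i.e.\ it either keeps the block of ones in place (last bit $1$, weight $\ell-1$ left on the first $n-1$ qubits) or shifts the whole block one position toward the top (last bit $0$, weight $\ell$ remaining). Applying $U_{n-1,k}$ to each branch then reproduces exactly the two terms of the recursion. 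I would construct $SCS_{n,k}$ as a ladder of $O(k)$ elementary gates on adjacent pairs: two-qubit rotations that move a single $1$ with amplitude $\sqrt{(n-\ell)/n}$, together with a third control qubit so the rotation angle sees the current weight. These three-qubit gates are the crux, since they make the splitting amplitude weight-dependent, which is what allows one fixed circuit to act correctly on inputs of every weight $\ell\le k$ at once.

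With the block specified, I would prove correctness by induction on $n$. The base case is a single qubit, where $U_{1,k}=I$ already sends $\ket{0}\mapsto\ket{D_0^1}$ and $\ket{1}\mapsto\ket{D_1^1}$. For the inductive step, apply $SCS_{n,k}$ to $\ket{0^{n-\ell}1^\ell}$ to obtain the two-term superposition above, then invoke the induction hypothesis for $U_{n-1,k}$ on each branch (legitimate because each branch leaves weight $\ell$ or $\ell-1\le k$ on the first $n-1$ qubits), and finally recombine via the Dicke recursion to land on $\ket{D_\ell^n}$. I expect the main obstacle to be verifying that a \emph{single} unitary works simultaneously for all $0\le\ell\le k$: each block and each recursive call must be weight-agnostic as an operator while still producing weight-specific amplitudes, so one must check that the constructed gates are genuinely unitary (not merely norm-correct on the chosen inputs) and that the recursion composes without interference between the distinct weight sectors.

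Finally I would count resources. Unrolling the recursion expresses $U_{n,k}$ as a sequence of $n-1$ blocks $SCS_{m,\min(k,m-1)}$ for $m=2,\dots,n$, the $m$-th acting on the first $m$ qubits. Since each block uses $O(\min(k,m))$ gates, the total is $\sum_{m}O(\min(k,m)) = O(nk)$ two-qubit gates, the triangular ``startup'' of size $O(k^2)$ being subsumed as $k<n$. For depth, naive stacking gives $O(nk)$, so the key observation is that the ladders of consecutive blocks overlap only near the top and can be pipelined systolically; scheduling the gates along diagonals of the resulting triangular array collapses the depth to $O(n+k)=O(n)$. Exhibiting an explicit conflict-free schedule for this parallelization is the second place I would expect to spend real effort.
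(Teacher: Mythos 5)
The paper does not prove this lemma itself; it imports it verbatim from the cited reference \cite{bartschi2019deterministic}, and your sketch correctly reconstructs exactly that reference's construction: the Dicke recursion $\ket{D_\ell^n}=\sqrt{\tfrac{n-\ell}{n}}\ket{D_\ell^{n-1}}\ket{0}+\sqrt{\tfrac{\ell}{n}}\ket{D_{\ell-1}^{n-1}}\ket{1}$, the split-and-cyclic-shift blocks with weight-controlled three-qubit rotations, induction on $n$, the $\sum_m O(\min(k,m))=O(nk)$ gate count, and the pipelined scheduling that brings the depth to $O(n)$. So your approach is essentially the same as the source's; the remaining work (explicit unitary gates realizing the weight-dependent angles and a concrete conflict-free schedule) is exactly what that reference supplies.
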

Later work indicates that one can utilize a divide and conquer approach to decrease the overhead of applying $U_{n,k}$ by distributing the initial weight $\ell$ to other qubits in the register \cite{aktar2022divide, bartschi2022short}. In this approach, one wants to create Dicke states on subsystems of $m$ and $n-m$ qubits for some natural number $m <n$. A key component of these works is the \textit{weight distribution block}, which will be used in the leaf-separable state preparation algorithm. 
\begin{definition}
    A weight distribution block is a unitary $WDB_{k}^{n,m}$ such that 
    \begin{equation}
    \resizebox{\columnwidth}{!}{$\displaystyle
        WDB_{k}^{n,m}\ket{0^{n-\ell}1^{\ell}} =\binom{n}{\ell}^{-\frac{1}{2}}\sum_{i=0}^\ell \sqrt{\binom{m}{i}\binom{n-m}{\ell-i}}\ket{0^{m-i}1^{i}}\ket{0^{n-m-(\ell-i)}1^{\ell-i}}
        $}
    \end{equation}
\end{definition}

In \cite{bartschi2022short}, the authors mention creating weight distribution trees for these blocks. Here, we provide a detailed algorithm for constructing them, \Cref{alg:partitiontree}, as the leaf-separable algorithm introduced in Sec.~\ref{sec:algorithm} heavily depends on these trees. To construct a weight distribution tree, start with $G = \lceil\frac{n}{k}\rceil$ nodes, $\lfloor\frac{n}{k}\rfloor$ of which contain $k$ qubits, and one containing $n\bmod k$ qubits. We create a binary partition tree recursively dividing the $n$ nodes into groups of $m = \lfloor\frac{G_i}{2}\rfloor$ and $n-m = \lceil\frac{G_i}{2}\rceil$ nodes, where $G_i$ is the set of leaves of vertex $i$ in the tree. We designate $G_0 = G$. Alternatively, one can construct this tree top down, starting with all qubits represented by a single node, and separating this into left and right children recursively.

\begin{algorithm}[H]
\caption{\cite{bartschi2022short} A balanced partition tree over qubits with leaf size at most $k$}
\label{alg:partitiontree}
\begin{algorithmic}
\Require Ordered list of qubits $Q$ (e.g., integers) and leaf-size threshold $k > 0$
\Ensure Root node of a binary tree where each leaf contains $\leq k$ qubits and internal nodes aggregate their descendants

\Function{BuildPartitionTree}{$Q, k$}
    \If{$|Q| \leq k$}
        \State \Return \Call{MakeNode}{$Q$} \Comment{Leaf node}
    \EndIf
    \State Partition $Q$ into contiguous chunks $C = [Q_1, Q_2, \dots, Q_m]$ where each $|Q_i| \leq k$:
    \Statex \quad $C \gets [\,Q[i : i+k] \mid i = 0, k, 2k, \dots\,]$
    \State \Return \Call{BuildFromChunks}{$C$}
\EndFunction
\\
\Function{BuildFromChunks}{$C$}
    \If{$|C| = 1$}
        \State \Return \Call{MakeNode}{$C[1]$} \Comment{Single chunk becomes a leaf}
    \EndIf
    \State $mid \gets \left\lfloor |C| / 2 \right\rfloor$
    \State $L \gets$ \Call{BuildFromChunks}{$C[1:mid]$}
    \State $R \gets$ \Call{BuildFromChunks}{$C[mid+1:|C|]$}
    \State $N \gets$ \Call{MakeNode}{$\textsc{Qubits}(L) \cup \textsc{Qubits}(R)$}
    \State \Call{SetChildren}{$N, L, R$}
    \State \Return $N$
\EndFunction
\\
\Procedure{MakeNode}{$S$}
    \State Create a node storing qubit set $S$ with null children
    \State \Return the new node
\EndProcedure
\\
\Function{Qubits}{$N$}
    \State \Return the set of qubits stored in node $N$
\EndFunction
\\
\Function{SetChildren}{$P, L, R$}
    \State Set $P$'s left child to $L$ and right child to $R$
\EndFunction
\end{algorithmic}
\end{algorithm}

From this process, we have a binary partition tree $T$ with depth $O(\log\frac{n}{k})$ and leaves of size at most $k$. For each node $v\in T$, let $Q_v$ be its qubit set with $n_v=|Q_v|$, and let $L_v$ and $R_v$ be its children. For every internal node $v$, we apply the weight distribution block $WDB_{k}^{n_v, n_{L_v}}$. At the leaves $u$ of the tree, we apply $U_{n_u,n_u}$. 

Each weight distribution block can be implemented in $O(k)$ depth and $O(k^2)$ 2-qubit gates with no ancilla qubits, so the weight distribution tree can be applied in $O(k\log\frac{n}{k})$ depth and $O(kn)$ 2-qubit gates. Noting that $n_u = O(k)$, combining the resource complexities of the weight distribution tree this with the $O(\frac{n}{k})$ applications of $U_{n_u,n_u}$, the overall state preparation algorithm can be implemented with  $O(k\log\frac{n}{k})$ depth and $O(kn)$ 2-qubit gates on all-to-all connectivity architectures \cite{bartschi2022short}. For a proof of correctness for this algorithm, we refer the readers to 
\cite{bartschi2022short}.

\subsection{Arbitrary State Preparation in the Hamming Weight $k$ Subspace}

Generalizations of Dicke state preparation algorithms to the broader Hamming weight $k$ Hilbert space require more resource intensive algorithms. Let $\mathcal{H}_{n,k} = \{\ket{\psi} = \sum_{b: h(b) = k}\alpha_b\ket{b}\}$ be the Hamming weight $k$ Hilbert space on $n$ qubits, where $\alpha_b \in \mathbb{C}$ such that $\sum_b |\alpha_b|^2 = 1$. 
Given a target state $\ket{\psi}$ and suitable initial state, we can use classical algorithms to find the circuit to prepare $\ket{\psi}\in\mathcal{H}_{n,k}$ \cite{PhysRevApplied.23.044014}. 

The algorithm contains two main subroutines (denoted S1 and S2) to identify the correct control and target qubits to iteratively mix two Hamming weight $k$ bitstrings. Given a bitstring $b$ and indices
where $b$ has ones, S1 uses Ehrlich's algorithm \cite{even1973algorithmic} to generate the next bitstring of Hamming weight $k$ for the rotation gate. An important detail is that the input string $b$ and the output string $b'$ have Hamming distance $2$, so we can isolate a rotation between the two bitstrings by identifying the positions in which they differ. Identifying these two positions gives us two qubits for which we can perform a rotation between. Given two qubits $q_1,q_2$, we apply a reconfigurable beam-splitter (RBS) gate, which acts on the subspace generated by $\{\ket{1_{q_1}0_{q_2}}, \ket{0_{q_1}1_{q_2}}\}$ via the following action
\begin{align}
    RBS(\theta,\phi)\ket{10} =&\; e^{i\frac{\phi}{2}}\cos\left(\frac{\theta}{2}\right)\ket{10} + e^{-i\frac{\phi}{2}}\sin\left(\frac{\theta}{2}\right)\ket{01}\\
     RBS(\theta,\phi)\ket{01} =&\; e^{-i\frac{\phi}{2}}\cos\left(\frac{\theta}{2}\right)\ket{10} - e^{i\frac{\phi}{2}}\sin\left(\frac{\theta}{2}\right)\ket{01} \label{eq:RBS}
\end{align}
and identity everywhere else. To maintain precision, we desire to extend this operation to act non trivially on $\ket{b}$ and $\ket{b'}$, which can be done by identifying all indices where $b$ and $b'$ share ones as control qubits. Given bitstrings $b$ and $b'$, S2 finds the suitable control qubit list and target qubit indices $q_1$ and $q_2$. 

Using the above subroutines and hyperspherical coordinates, the Hamming weight $k$ subspace preparation algorithm starts with an initial bitstring $b = 0^{n-k}1^k$ and initial state $\ket{b}$ and performs $\binom{n}{k}-1$ iterations, one for each rotation angle required to represent the target state. In the complex case, there is an extra $RZ$ gate after this process to add the correct phase on the state. The algorithm uses $O(\binom{n}{k})$ depth and $O(k \binom{n}{k})$ CNOT gates. For more information about the subroutines and algorithm, we refer the readers to \cite{PhysRevApplied.23.044014}. 
In the exposition of our algorithm, we will provide adapted versions of S2 and the core algorithm which are tailored to our use-cases.

\section{Leaf-separable state preparation algorithm}\label{sec:algorithm}

\begin{table*}[t]
\centering
\renewcommand{\arraystretch}{1.4}
\begin{tabular}{|>{\centering}p{3cm}|>{\centering}p{5cm}|>{\centering}p{4.25cm}|>{\centering}p{4.75cm}|}
\hline
\textbf{Symbol} &
\textbf{Meaning / Context} &
\textbf{Derived from previous quantities} &
\textbf{Explicit expression in terms of target amplitudes $\boldsymbol{\alpha_b}$} \tabularnewline
\hline
$\alpha_b$ &
Amplitude of computational basis state $\ket{b}$ in target state
$\displaystyle \ket{\psi} = \sum_{b: h(b)=\ell}\alpha_b\ket{b}$ &
--- &
Given directly from the target state $\ket{\psi}$ \tabularnewline
\hline
$c(I)$ &
Norm of the target state restricted to weight distribution
$I=(i_1,\dots,i_L)$, i.e., the amplitude weight of $\ket{\psi_I}$ &
$c(I) = \|P_I\ket{\psi}\| = \sqrt{\sum_{\substack{b\\I(b)=I}}|\alpha_b|^2}$ &
$\displaystyle c(I) = \sqrt{\sum_{\substack{b: I(b)=I}}|\alpha_b|^2}$ \tabularnewline
\hline
$\ket{\psi_I}$ &
Normalized projection of $\ket{\psi}$ onto the subspace with weight distribution $I$ &
$\displaystyle \ket{\psi_I} = \frac{P_I\ket{\psi}}{\|P_I\ket{\psi}\|}$ &
$\displaystyle \ket{\psi_I} = \frac{1}{c(I)}\sum_{\substack{b: I(b)=I}}\alpha_b\ket{b}$ \tabularnewline
\hline
$\alpha'_b$ &
Re-labeled amplitude normalized by reference basis element
$b^*$ (the lexicographically smallest with $\alpha_{b^*}\neq 0$ in its $I$-class) &
$\displaystyle \alpha'_b = \frac{\alpha_b}{\alpha_{b^*}}$ &
$\displaystyle \alpha'_b = \frac{\alpha_b}{\alpha_{b^*}}$ \tabularnewline
\hline
$\gamma_{u,i_u}(g_u)$ &
Local relative amplitude of leaf subsystem $u$ for weight $i_u$,
holding all other leaves fixed to their reference $g_v^*$ &
$\displaystyle \gamma_{u,i_u}(g_u) =
\frac{\alpha'_{g_1^*,\dots,g_u,\dots,g_G^*}}
{\alpha'_{g_1^*,\dots,g_u^*,\dots,g_G^*}}$ &
$\displaystyle
\gamma_{u,i_u}(g_u) =
\frac{\alpha_{g_1^*,\dots,g_u,\dots,g_G^*}}
{\alpha_{g_1^*,\dots,g_u^*,\dots,g_G^*}}$ \tabularnewline
\hline
$\eta_{u,i_u}(g_u)$ &
Normalized local amplitude used in the Hamming-weight encoder for leaf $u$ &
$\displaystyle
\eta_{u,i_u}(g_u) = 
\frac{\gamma_{u,i_u}(g_u)}
{\sqrt{\sum_{k_u}|\gamma_{u,i_u}(k_u)|^2}}$ &
$\displaystyle
\eta_{u,i_u}(g_u) = 
\sum_{g_u\in\mathcal{G}_{u,i_u}}
\frac{\frac{\alpha_{g_1^*,\dots,g_u,\dots,g_G^*}}{\alpha_{g_1^*,\dots,g_u^*,\dots,g_G^*}}
}{\sqrt{\sum_{k_u}\left|\frac{\alpha_{g_1^*,\dots,k_u,\dots,g_G^*}}{\alpha_{g_1^*,\dots,g_u^*,\dots,g_G^*}}\right|^2}}$
\tabularnewline
\hline
$\beta_i^{v,\ell}$ &
Amplitude ratio encoding the relative weight distribution $(i,\ell-i)$
across the left and right subtrees of node $v$ &
$\displaystyle \beta_i^{v,\ell} = 
\frac{\alpha_{i,\ell-i}^v}{\alpha_\ell^v}$ &
$\displaystyle
\beta_i^{v,\ell} = 
\frac{\sqrt{\sum_{\substack{b: h(b|_{Q_{L_v}})=i\\h(b|_{Q_{R_v}})=\ell-i}}|\alpha_b|^2}}
{\sqrt{\sum_{i'=0}^\ell\sum_{\substack{b: h(b|_{Q_{L_v}})=i'\\h(b|_{Q_{R_v}})=\ell-i'}}|\alpha_b|^2}}$ \tabularnewline
\hline
$\ket{\psi_{i_u}}$ &
Leaf subsystem state on $n_u$ qubits with Hamming weight $i_u$ &
$\displaystyle
\ket{\psi_{i_u}} =
\sum_{g_u\in \mathcal{G}_{u,i_u}}
\eta_{u,i_u}(g_u)\ket{g_u}$ &
{\tiny $\displaystyle
\ket{\psi_{i_u}} =
\sum_{g_u\in\mathcal{G}_{u,i_u}}
\frac{\frac{\alpha_{g_1^*,\dots,g_u,\dots,g_G^*}}{\alpha_{g_1^*,\dots,g_u^*,\dots,g_G^*}}
}{\sqrt{\sum_{k_u}\left|\frac{\alpha_{g_1^*,\dots,k_u,\dots,g_G^*}}{\alpha_{g_1^*,\dots,g_u^*,\dots,g_G^*}}\right|^2}}
\ket{g_u}$} \tabularnewline
\hline
\textbf{Leaf-separability condition} &
State $\ket{\psi}$ is leaf-separable iff each $\ket{\psi_I}$ factorizes over leaves &
$\displaystyle \alpha'_b = \prod_{u=1}^G \gamma_{u,i_u}(g_u)$ &
$\displaystyle
\frac{\alpha_b}{\alpha_{b^*}} = 
\prod_{u=1}^G
\frac{\alpha_{g_1^*,\dots,g_u,\dots,g_G^*}}
{\alpha_{g_1^*,\dots,g_u^*,\dots,g_G^*}}$ \tabularnewline
\hline
\end{tabular}
\caption{Relationships among the amplitude coefficients and subsystem factors in the leaf-separable state construction. Each quantity is shown both recursively (from prior definitions) and explicitly in terms of the original amplitudes $\alpha_b$.}
\label{tab:amplitude-hierarchy}
\end{table*}

Throughout this work, we will let $G = \lceil \frac{n}{k} \rceil$. Given a target state $\ket{\psi}\in\mathcal{H}_{n,\ell}$, define the amplitude for a basis state $\ket{b}\in\mathcal{H}_{n,\ell}$ via $\alpha_b = \braket{\psi \vert b}$. Let us arbitrarily divide the $n$ qubits into $G$ subsystems and define the weight distribution of a basis state $\ket{b}\in \mathcal{H}_{n,\ell}$ as $I(b) = (i_1,\dots, i_G)$ where $i_j$ is the Hamming weight of subsystem $j$. If $I(b)$ satisfies $\sum_j i_j = \ell$ and $0\leq i_j\leq \ell$ for all $j$, we call it \textit{valid}.
Define the projection
\begin{equation}
    P_I = \sum_{\substack{\ket{b}\in\mathcal{H}_{n,\ell}\\ I(b) = I}}\ket{b}\bra{b},
\end{equation}
which, when acting on $\ket{\psi}$, results in the state $\ket{\psi_I} = \frac{P_I\ket{\psi}}{\|P_I\ket{\psi}\|}$. Let $c(I) = \|P_I\ket{\psi}\| = \sqrt{\sum_{\substack{b\in\{0,1\}^n\\ I(b) = I}} |\alpha_b|^2}$ and $\sqrt{\sum_{I}c(I)^2} = \sqrt{\sum_{\ket{b}\in\mathcal{H}_{n,\ell}}|\alpha_b|^2} = 1$. Using this, we can represent the target state $\ket{\psi} = \sum_{I}c(I) \ket{\psi_I}$ and the target amplitudes via $\alpha_b = c(I(b))\braket{b \vert \psi_{I(b)}}$. 
\begin{definition}
    We say a state $\ket{\psi}$ is \textit{leaf-separable} if for every valid $I$, $\ket{\psi_I}\in \bigotimes_{u=1}^G \mathcal{H}_{n_u,i_u}$. That is, we can write $\ket{\psi_I} = \bigotimes_{u=1}^G \ket{\psi_{i_u}}$. 
\end{definition}

As an example of a leaf-separable state, consider a $n=4$ qubit system with $\ell = 2$, two subsystems with $n_1 = n_2 = 2$.
In this case, the subsystem basis strings of weight $0$ is $\{ 00 \}$, of weight $1$ are $\{ 01, 10 \}$, and of weight $2$ is $\{ 11 \}$.
We will define the following subsystem states
\begin{align*}
    \ket{\phi_0} &= \ket{00} \in \mathcal{H}_{2, 0}, \\
    \ket{\phi_1} &= \frac{1}{\sqrt{2}} (\ket{01} + \ket{10}) \in \mathcal{H}_{2, 1}, \\
    \ket{\phi_2} &= \ket{11} \in \mathcal{H}_{2, 2}.
\end{align*}
We claim that the following state is leaf-separable:
{\small
\begin{align}
    \ket{\psi} &= \frac{1}{\sqrt{2}} \left(\ket{\phi_1} \otimes \ket{\phi_1}\right) + \frac{1}{\sqrt{2}} \left(\ket{\phi_2} \otimes \ket{\phi_0}\right) \notag \\
    &= \frac{1}{2\sqrt{2}} \left( \ket{0101} + \ket{0110} + \ket{1001} + \ket{1010}\right) + \frac{1}{\sqrt{2}} \ket{1100} \label{eqn: leaf_state}.
\end{align}
}Notice that there are three valid weight-distributions $I = (1, 1)$, $I' = (2, 0)$, and $I'' = (0,2)$ (although $I''$ has zero support over $\ket{\psi}$).
The projectors for each weight-distribution are
{\small
\begin{align*}
    P_I &= \ket{1010}\bra{1010} + \ket{1001}\bra{1001} + \ket{0110}\bra{0110} + \ket{0101}\bra{0101}, \\
    P_{I'} &= \ket{1100}\bra{1100}, \\
    P_{I''} &= \ket{0011}\bra{0011}.
\end{align*}
}Thus,
\begin{align*}
    \ket{\psi_I} &= \ket{\phi_1} \otimes \ket{\phi_1}, & c(I) &= || P_I \ket{\psi} || = \frac{1}{\sqrt{2}}, \\
    \ket{\psi_{I'}} &= \ket{\phi_2} \otimes \ket{\phi_0}, & c(I') &= || P_{I'} \ket{\psi} || = \frac{1}{\sqrt{2}} \\
    \ket{\psi_{I''}} &= 0, & c(I'') &= 0.
\end{align*}
Here, we can see that $\ket{\psi_I} \in \mathcal{H}_{2,1} \otimes \mathcal{H}_{2,1}$ and $\ket{\psi_{I'}} \in \mathcal{H}_{2,2} \otimes \mathcal{H}_{2,0}$, thus satisfying the definition of being leaf-separable.

As an example of a state that is not leaf-separable, consider
\[
    \ket{\psi} = \frac{1}{\sqrt{2}} \left( \ket{10} \otimes \ket{01} + \ket{01} \otimes \ket{10} \right).
\]
The only valid weight-distribution with nonzero support is $I = (1, 1)$.
However $\ket{\psi_I} = \ket{\phi} \notin \mathcal{H}_{2,1} \otimes \mathcal{H}_{2,1}$.

In view of the above, any leaf-separable state may be expressed as $\ket{\psi} = \sum_{I}c(I)\bigotimes_{u=1}^G \ket{\psi_{i_u}}$ where $\ket{\psi_{i_u}}\bra{\psi_{i_u}} = Tr_{\substack{1\leq v\leq G\\v\neq u}}(\ket{\psi_I}\bra{\psi_I})$ is the state restricted to sub-system $u$.
Note that $\ket{\psi}$ need not be a product state since this is still a superposition of product states. 

Since $\alpha_b$ must separate along the leaves, we write $\ket{b} = \ket{g_1}\otimes \cdots \otimes \ket{g_G}$, meaning $c(I(b)) = c(h(g_1),\dots, h(g_G))$. We must have 
\begin{align}
    \braket{b \vert \psi_{I(b)}} =& (\bra{g_1}\otimes\cdots \otimes \bra{g_G})(\ket{\psi_{i_1}}\otimes \cdots\otimes \ket{\psi_{i_G}})\\
    =&\braket{g_1 \vert \psi_{i_1}}\cdots \braket{g_G \vert \psi_{i_G}}
\end{align}
We now define $\ket{\psi_{i_j}}$ with a basis by devising a method to compute it.

Let $g_u\in\{0,1\}^{n_u}$ with $h(g_u) = i_u$. We sort all such $g_u$ lexicographically, forming an ordered set $\mathcal{G}_{u,i_u}$. For each weight distribution $I$, choose the lexicographically smallest element $b^*\in\bigotimes_{u=1}^G \mathcal{G}_{u,i_u}$ such that $I(b^*) = I$ and $\alpha_{b^*}\neq 0$. We will write $b^* = g_1^*...g_G^*$. From here, define a re-labeling of amplitudes $\alpha_b' = \frac{\alpha_b}{\alpha_{b^*}}$. Using this, let $\gamma_{u,i_u}(g_u) = \alpha_{g_1^*\dots g_{u-1}^*g_ug_{u+1}^*\dots g_L^*}'=\frac{\alpha_{g_1^*\dots g_{u-1}^*g_ug_{u+1}^*\dots g_G^*}}{\alpha_{g_1^*\dots g_{u-1}^*g_u^*g_{u+1}^*\dots g_G^*}}$, allowing us to define
\begin{equation}
    \ket{\psi_{i_u}} = \sum_{g_u\in\mathcal{G}_{u,i_u}}\frac{\gamma_{u,i_u}(g_u)}{\sqrt{\sum_{k_u\in \mathcal{G}_{u,i_u}}|\gamma_{u,i_u}(k_u)|^2}}\ket{g_u}. \label{eq:psi_i_u}
\end{equation}
 It remains to show that for a fixed $\ket{b} = \ket{g_1}\otimes \cdots \otimes \ket{g_G}$, we can express $\alpha_b$ with these re-labeled amplitudes. \\
\begin{lemma}\label{product lemma}
    A state $\ket{\psi}$ is leaf-separable if and only if $\alpha_b' = \prod_{u=1}^G\gamma_{u,i_u}(g_u)$.
\end{lemma}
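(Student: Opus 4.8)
The plan is to establish the biconditional one weight distribution at a time. I would fix a valid $I$ with nonzero support $c(I)\neq 0$ (classes with $c(I)=0$ contribute the zero vector to $\ket{\psi}$ and admit no reference $b^*$, so both sides are vacuous there), and show that \emph{$\ket{\psi_I}$ factorizes over the leaves} is equivalent to \emph{$\alpha_b'=\prod_{u}\gamma_{u,i_u}(g_u)$ holds for every $b$ in class $I$}. The engine of both directions is the decomposition $\alpha_b=c(I(b))\braket{b|\psi_{I(b)}}$ already recorded in the setup, which inside a single class reads $\alpha_b=c(I)\braket{b|\psi_I}$.

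For the forward implication I would assume $\ket{\psi_I}=\bigotimes_{u=1}^G\ket{\psi_{i_u}}$ and use $\ket{b}=\bigotimes_u\ket{g_u}$ to get $\alpha_b=c(I)\prod_u\braket{g_u|\psi_{i_u}}$. Dividing the instance at $b$ by the instance at the reference $b^*=g_1^*\cdots g_G^*$ --- and noting that $\alpha_{b^*}\neq0$ forces every $\braket{g_v^*|\psi_{i_v}}\neq0$ --- makes the common factor $c(I)$ cancel and leaves $\alpha_b'=\prod_u \braket{g_u|\psi_{i_u}}/\braket{g_u^*|\psi_{i_u}}$. Applying the same factorization to the single-leaf-modified string $g_1^*\cdots g_u\cdots g_G^*$, where every factor except the $u$-th coincides with the corresponding factor of $b^*$, gives $\gamma_{u,i_u}(g_u)=\braket{g_u|\psi_{i_u}}/\braket{g_u^*|\psi_{i_u}}$; taking the product over $u$ reproduces the claimed identity.

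For the converse I would substitute $\alpha_b=\alpha_{b^*}\prod_u\gamma_{u,i_u}(g_u)$ into $\ket{\psi_I}=c(I)^{-1}\sum_{b:I(b)=I}\alpha_b\ket{b}$. Because a basis string of class $I$ is exactly an independent choice of $g_u\in\mathcal{G}_{u,i_u}$ for each $u$, the sum splits as a tensor product, giving $\ket{\psi_I}=\tfrac{\alpha_{b^*}}{c(I)}\bigotimes_{u}\big(\sum_{g_u\in\mathcal{G}_{u,i_u}}\gamma_{u,i_u}(g_u)\ket{g_u}\big)$. Each factor is nonzero since it contains the term $g_u=g_u^*$ with coefficient $\gamma_{u,i_u}(g_u^*)=1$, so $\ket{\psi_I}$ is proportional to a genuine product vector and, being normalized, lies in $\bigotimes_u\mathcal{H}_{n_u,i_u}$; rescaling each factor to unit norm recovers exactly Eq.~\eqref{eq:psi_i_u}.

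The bookkeeping I expect to be most delicate is well-definedness rather than the algebra: I must confirm $\alpha_{b^*}\neq0$ (guaranteed by choosing $b^*$ lexicographically smallest with nonzero amplitude in its class) and that each modified string $g_1^*\cdots g_u\cdots g_G^*$ still belongs to class $I$ (true because $h(g_u)=i_u=h(g_u^*)$). The one genuinely structural point is matching the two quantifiers: leaf-separability ranges over all valid $I$, whereas the product identity ranges over all basis strings $b$; partitioning the strings by their class $I(b)$ aligns the quantifiers, so the per-class equivalences assemble into the stated global biconditional.
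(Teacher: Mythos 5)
Your proposal is correct and follows essentially the same route as the paper's proof: the forward direction divides the factorized amplitude at $b$ by that at $b^*$ and identifies $\gamma_{u,i_u}(g_u)=\braket{g_u|\psi_{i_u}}/\braket{g_u^*|\psi_{i_u}}$, and the converse substitutes the product formula into $\ket{\psi_I}=c(I)^{-1}\sum_{b:I(b)=I}\alpha_b\ket{b}$ and splits the sum into a tensor product over leaves. Your added bookkeeping (nonvanishing of $\braket{g_v^*|\psi_{i_v}}$, handling of $c(I)=0$ classes, and the per-class alignment of quantifiers) is a slightly more careful rendering of details the paper leaves implicit, but does not change the argument.
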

See Appendix~\ref{sec:productlemma} for the proof.

\begin{lemma}\label{amplitude formula}
    If $\ket{\psi}$ is leaf-separable, then $\alpha_b = e^{i\text{arg}(\alpha_{b^*})}c(I(b))\prod_{u=1}^G\frac{\gamma_{u,i_u}(g_u)}{\sqrt{\sum_{k_u}|\gamma_{u,i_u}(k_u)|^2}}$, where $\text{arg}(\alpha_{b^*})$ is the usual argument of a complex number.
\end{lemma}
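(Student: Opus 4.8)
The plan is to derive the amplitude formula directly from Lemma~\ref{product lemma} together with the factorization structure already established, treating the phase and the magnitude separately. I would start from the exact representation of the target amplitudes that the excerpt has already set up, namely $\alpha_b = c(I(b))\braket{b \vert \psi_{I(b)}}$, and then substitute the leaf-separable factorization $\braket{b \vert \psi_{I(b)}} = \prod_{u=1}^G \braket{g_u \vert \psi_{i_u}}$. Each inner product $\braket{g_u \vert \psi_{i_u}}$ is read off directly from the definition in \eqref{eq:psi_i_u}, giving $\braket{g_u \vert \psi_{i_u}} = \gamma_{u,i_u}(g_u)\big/\sqrt{\sum_{k_u}|\gamma_{u,i_u}(k_u)|^2}$. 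Multiplying these over $u$ reproduces exactly the product that appears in the claimed formula, so the bulk of the statement follows by direct substitution.

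The one nontrivial point is accounting for the phase factor $e^{i\,\mathrm{arg}(\alpha_{b^*})}$, which is where I expect the main (though modest) obstacle to lie. The issue is that the ratios $\gamma_{u,i_u}(g_u)$ and the normalizations are defined relative to the reference element $b^* = g_1^*\cdots g_G^*$, and by construction $\gamma_{u,i_u}(g_u^*) = 1$ for each $u$, so the product formula evaluated at $b = b^*$ must be consistent with $\alpha_{b^*}$ itself. Concretely, I would verify the formula at the base point: setting $b = b^*$ forces every factor $\gamma_{u,i_u}(g_u^*)/\sqrt{\sum_{k_u}|\gamma_{u,i_u}(k_u)|^2} = 1/\sqrt{\sum_{k_u}|\gamma_{u,i_u}(k_u)|^2}$, and matching this against $\alpha_{b^*} = |\alpha_{b^*}|e^{i\,\mathrm{arg}(\alpha_{b^*})}$ pins down both that the real normalization constants fold into $c(I)$ and that the residual complex phase is precisely $e^{i\,\mathrm{arg}(\alpha_{b^*})}$. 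This is really a bookkeeping check that the phase convention chosen when defining $\ket{\psi_{i_u}}$ via the relabeled amplitudes $\alpha_b' = \alpha_b/\alpha_{b^*}$ leaves behind exactly one global phase per $I$-class.

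To make the phase argument clean, I would first show that $c(I(b))$ absorbs all the magnitude information coming from the normalization denominators and the norm $\|P_I\ket{\psi}\|$, using the definition $c(I) = \sqrt{\sum_{b: I(b)=I}|\alpha_b|^2}$ and the fact that $\ket{\psi_{i_u}}$ are unit vectors, so that $\|P_I\ket{\psi}\| = c(I)$ is consistent with $\ket{\psi_I} = \bigotimes_u \ket{\psi_{i_u}}$ being normalized. Since each $\ket{\psi_{i_u}}$ has unit norm by the explicit normalization in \eqref{eq:psi_i_u}, the tensor product is automatically normalized, and the only quantity carrying the overall complex phase of the $I$-class is $\alpha_{b^*}$, whose modulus has already been stripped out into the $\gamma$ ratios. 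Collecting the real factor into $c(I(b))$ and the unit-modulus factor into $e^{i\,\mathrm{arg}(\alpha_{b^*})}$ then yields the stated identity. Throughout, I would lean on Lemma~\ref{product lemma} to justify that the product over leaves of the relabeled amplitudes genuinely equals $\alpha_b'$, so that no cross-terms or interference between leaves survive; the leaf-separability hypothesis is exactly what guarantees this clean multiplicative split.
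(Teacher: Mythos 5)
Your proposal is correct and follows essentially the same route as the paper's proof: both reduce the claim to \Cref{product lemma} together with the normalization identity $\prod_{u=1}^G\sqrt{\sum_{k_u}|\gamma_{u,i_u}(k_u)|^2}=c(I(b))/|\alpha_{b^*}|$. The only difference is organizational --- you pin down the phase $e^{i\arg(\alpha_{b^*})}$ by evaluating the product formula at the reference string $b^*$ and using that all the vectors involved are normalized, whereas the paper cancels $e^{i\arg(\alpha_{b^*})}/\alpha_{b^*}=1/|\alpha_{b^*}|$ algebraically and obtains the normalization product by expanding it into $\sqrt{\sum_{b':\,I(b')=I(b)}|\alpha'_{b'}|^2}$; these are the same computation.
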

The core idea of the proof of \Cref{amplitude formula} is to use \Cref{product lemma} in conjunction with the fact that $\prod_{u=1}^G\sqrt{\sum_{k_u}|\gamma_{u,i_u}(k_u)|^2} = \frac{c(I)}{|\alpha_{b^*}|}$. See Appendix~\ref{sec:AmplitudeProof} for details.

We want to calculate $\eta_{u,i_u}(g_u) = \frac{\gamma_{u,i_u}(g_u)}{\sqrt{\sum_{k_u}|\gamma_{u,i_u}(k_u)|^2}}$ as these will be the amplitudes for states we prepare on the leaves using the Hamming weight encoders from \cite{PhysRevApplied.23.044014}. \Cref{alg:compute-leaf-amplitudes} outlines the pseudocode for calculating $\eta_{u,i_u}(g_u) = \frac{\gamma_{u,i_u}(g_u)}{\sqrt{\sum_{g_u'}|\gamma_{u,i_u}(g_u')|^2}}$, given a target state, as described above, and we will describe how the hamming weight encoders interact to prepare leaf-separable states in \ref{sec: leaf encoders}

\begin{algorithm}[H]
\caption{Compute leaf amplitudes (``$\eta$'' values) for a target state}
\label{alg:compute-leaf-amplitudes}
\begin{algorithmic}[1]
\Require Target amplitude vector $\boldsymbol{\alpha}$ of length $N$, partition tree $T$ with leaves corresponding to disjoint qubit subsets, and list of total Hamming weights $\mathcal{K}$
\Ensure Dictionary mapping each pair (leaf node $u$, local weight $i_u$) to its Ehrlich-ordered $\eta$ amplitudes

\State Let $\mathcal{L} \gets$ leaves of $T$, sorted by the first qubit index in each leaf
\State Let $\{Q_u\}_{u\in\mathcal{L}}$ be the qubit sets of the leaves
\State Initialize empty dictionary $\mathrm{AllLeafAmps} \gets \{\}$  \Comment{stores Ehrlich-ordered etas}

\ForAll{distribution $d=(i_u)_{u\in\mathcal{L}}$ with $\sum_u i_u = \ell$}
    \ForAll{leaf $u \in \mathcal{L}$ with local weight $i_u$ from $d$}
        \If{$(u, i_u) \in \mathrm{AllLeafAmps}$}
            \State \textbf{continue} \Comment{already computed}
        \EndIf

        \State $b^* \gets$ \Call{GetReferenceState}{$d$, $\{Q_u\}$}
        \If{$b^*$ is invalid or $\alpha_{b^*} \approx 0$}
            \State \textbf{continue} \Comment{distribution not present; skip}
        \EndIf
        \State $\alpha_{b^*} \gets \boldsymbol{\alpha}[b^*]$

        \If{$i_u = 0$}
            \State $\mathrm{AllLeafAmps}[(u,i_u)] \gets [1]$
            \State \textbf{continue}
        \EndIf

        \State Initialize $\Gamma \gets \{\}$  \Comment{intermediate (standard-basis) gamma amplitudes}
        \State Let $B_u \gets$ sorted basis states on qubits $Q_u$ of weight $i_u$
        \ForAll{$g_u \in B_u$}
            \State $b' \gets$ \Call{VariantState}{$b^*, u, g_u, \{Q_u\}$}
            \State $\alpha_{b'} \gets$ $\boldsymbol{\alpha}[b']$ if $b' < |\boldsymbol{\alpha}|$ else $0$
            \State $\gamma_{g_u} \gets \alpha_{b'} / \alpha_{b^*}$
            \State $\Gamma[g_u] \gets \gamma_{g_u}$
        \EndFor

        \State $N \gets \sum_{g_u} |\Gamma[g_u]|^2$ \Comment{norm squared}
        \State $\eta_{g_u} \gets \Gamma[g_u] / \sqrt{N}$ for each $g_u$
        \State $\mathrm{AllLeafAmps}[(u,i_u)] \gets$ \Call{EhrlichAmps}{$\eta$, $Q_u$, $i_u$}
    \EndFor
\EndFor

\State \Return $\mathrm{AllLeafAmps}$
\end{algorithmic}
\end{algorithm}

In the remainder of this section, we briefly give an overview of the leaf-separable state preparation algorithm. Subsequent sections will provide proofs of the claims and properties needed in the algorithm.

\subsection{Leaf-separable algorithm overview}
For the leaf-separable state preparation algorithm, we first begin with some initial state $\ket{0^{n-\ell}1^{\ell}}$. We then create a weight distribution tree $T$ such that the leaves have size at most $k$, meaning $G = \lceil\frac{n}{k}\rceil$. Once $T$ has been created, apply $gWDB$ to all non-leaf nodes in $T$, denoted $\text{Int}(T)$, resulting in the state
\begin{equation*}
    \ket{\Psi} = \sum_{\substack{i_1,\dots, i_G\\ \sum i_u = \ell}} \prod_{w\in \text{Int}(T_v)}\frac{\alpha_{I_{L_w},I_{R_w}}^w}{\alpha_{I_w}^w}\bigotimes_{u=1}^G \ket{0^{n_u-i_u}1^{i_u}}.
\end{equation*}
Then, the application of $E^u$, defined below in Eqn.~\ref{Leaf unitary}, to each leaf node of $\ket{\Psi}$ yields the target leaf-separable state $\ket{\psi} = \sum_{b:h(b) = \ell}\alpha_b\ket{b}$. For the pseudocode, see \Cref{alg:state-prep-algorithm}.

\begin{algorithm}[H]
\caption{Overview of the leaf-separable state preparation algorithm}
\label{alg:state-prep-algorithm}
\begin{algorithmic}
\Require Target amplitude vector $\boldsymbol{\alpha}$ of length $N$, leaf size $k$.

\State Create an initial state $\ket{0^{n-\ell}1^{\ell}}$.

\State Create binary partition tree $T$ using \Cref{alg:partitiontree}

\ForAll{$v \in \text{Int}(T)$}
            \State Compute $\{\beta_i^{v,\ell}\}$ from $\{\alpha_b\}$ of target state
            \State Apply $gWDB$ to $v$
            \State \Return $\displaystyle\ket{\Psi} = \sum_{\substack{i_1,\dots, i_G\\ \sum i_u = \ell}} \prod_{w\in \text{Int}(T_v)}\frac{\alpha_{I_{L_w},I_{R_w}}^w}{\alpha_{I_w}^w}\bigotimes_{u=1}^G \ket{0^{n_u-i_u}1^{i_u}}$ \Comment{Application of $gWDB$ creates intermediate state $\ket{\Psi}$}
     \EndFor
    
    \ForAll{$u$ leaf of $\ket{\Psi}$}      \State Apply $E^u$ to $u$
\Comment{Application of $E^u$ to all $u$ creates target state}
   \EndFor

\State \Return Target state $\displaystyle\ket{\psi} = \sum_{b:h(b) = \ell}\alpha_b\ket{b}$
\end{algorithmic}
\end{algorithm}

\subsection{Generalized Weight Distribution Blocks}
Given a target state $\ket{\psi} = \sum_{b:h(b) = \ell}\alpha_b\ket{b}$, define 
\begin{align}
    \alpha_{i,j}^v =& \left\|\{\alpha_b: h(b\vert_{Q_{L_v}}) = i, h(b\vert_{Q_{R_v}}) = j\}\right\|_2\\ =& \sqrt{\sum_{\substack{b: \\h(b\vert_{Q_{L_v}}) = i,\\ h(b\vert_{Q_{R_v}})=j}}|\alpha_b|^2},
\end{align}
where $b|_{Q_{L_V}}$ is a substring containing indices $i\in Q_{L_V}$ for $b = b_1b_2\dots b_n$.
Observe $\alpha_{i,j}^v$ is the norm of $\ket{\psi}$ when restricted to amplitudes where the left side of the qubits in $v$ carries weight $i$, and the right side carries weight $j$. Also define,
\begin{equation}
    \alpha_\ell^v = \sqrt{\sum_{i=0}^\ell |\alpha_{i,\ell-i}^{v}|^2}
\end{equation}
to be the norm over all possible splits of the Hamming weight in $v$. Finally, we define
\begin{equation}\label{eq:beta}
    \beta^{v,\ell}_i = \frac{\alpha_{i,\ell-i}^v}{\alpha_{\ell}^v},
\end{equation}
which is the square root of the probability we see the weight distribution $(i,\ell-i)$ in the children of $v$. Note in \Cref{amplitude formula} that the target state carries the complex phase of the reference state $\alpha_{b^*}$. Since there is a fixed reference state, one must account for the nonzero phase $e^{i\mathrm{arg}(\alpha_{b^*})}$ in the gWDB step of the algorithm, where $\mathrm{arg}(\alpha)$ is the argument of $\alpha$. One can account for the phase by applying controlled phase gates after the controlled rotations within the gWDB gate. In view of the lack of conditioning in the tree, this is only exact when $k = \frac{n}{2}$. 

\begin{definition}
    Given a target state $\ket{\psi}$, the generalized weight distribution block at node $v$, denoted $gWDB_{k}^{n_v,m_v}$, is a unitary such that 
    \small
    \begin{equation}
       gWDB_{k}^{n_v,m_v}\ket{0^{n-\ell}1^{\ell}}  =\sum_{i=0}^\ell \beta_{i}^{v,\ell}\ket{0^{m_v-i}1^{i}}\ket{0^{n_v-m_v-(\ell-i)}1^{\ell-i}}.
    \end{equation}
\end{definition}
After computing $\beta_{i}^{v,\ell}$ for each $v$ and $i\leq \ell$, we use hyperspherical coordinates for each $i$ to determine the correct rotation angles. Recursively applying these unitaries at every node $v\in \text{Int}(T)$, where $T$ is acquired by \Cref{alg:partitiontree}, we acquire a similar intermediate state as in the Dicke State preparation algorithm.
\begin{lemma}\label{recursive gwdb lemma}
    Given a target state $\ket{\psi}$ and an initial state $\ket{0^{n-\ell}1^{\ell}}$, applying the generalized gWDB unitary recursively to a tree $T_v$ rooted at $v$ yields the state
    \begin{equation}
        \ket{\Psi} = \sum_{\substack{i_1,\dots, i_G\\ \sum i_u = \ell}} \prod_{w\in \text{Int}(T_v)}\frac{\alpha_{I_{L_w},I_{R_w}}^w}{\alpha_{I_w}^w}\bigotimes_{u=1}^G \ket{0^{n_u-i_u}1^{i_u}}
    \end{equation}
    where $I_w$ is the Hamming weight on the qubits of $w$, $I_{L_w} (I_{R_w})$ is the Hamming weight of the qubits of the left (right) child of $w$. 
\end{lemma}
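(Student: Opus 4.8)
The plan is to prove \Cref{recursive gwdb lemma} by structural induction on the subtree $T_v$, tracking how a single $gWDB$ block at the root composes with the $gWDB$ blocks applied to its two child subtrees. The structural fact I would lean on throughout is that the children $L_v$ and $R_v$ of every node own disjoint qubit sets $Q_{L_v}, Q_{R_v}$ that partition $Q_v$. Hence each $gWDB$ acts on a distinct register, the operations applied within $T_{L_v}$ commute with those applied within $T_{R_v}$, and the resulting states factor as a tensor product. Note that the lemma is purely structural: the quantities $\alpha_{i,j}^w$ are norms of restrictions of the \emph{fixed} global target $\ket{\psi}$, defined for each node $w$ independently of the recursion depth, so no leaf-separability hypothesis is needed here.

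For the base case, take $v$ to be a leaf. Then $T_v$ has no internal nodes, so $\text{Int}(T_v)=\emptyset$, the empty product equals $1$, and $G=1$. No $gWDB$ is applied, so the state remains $\ket{0^{n_v-\ell}1^{\ell}}$, which is exactly the claimed expression.

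For the inductive step, let $v$ be internal with children $L_v,R_v$ and set $m_v=n_{L_v}$. Applying $gWDB_k^{n_v,m_v}$ to $\ket{0^{n_v-\ell}1^{\ell}}$ gives, by definition, $\sum_{i=0}^\ell \beta_i^{v,\ell}\,\ket{0^{m_v-i}1^{i}}_{L_v}\ket{0^{n_v-m_v-(\ell-i)}1^{\ell-i}}_{R_v}$. In each term the left register carries weight $i$ and the right carries weight $\ell-i$, so I would invoke the induction hypothesis on $T_{L_v}$ with total weight $i$ and on $T_{R_v}$ with total weight $\ell-i$. Substituting the two resulting expansions and using the tensor factorization across the disjoint leaf sets of $L_v$ and $R_v$, the inner double sum over the leaf distributions of the two subtrees together with the outer sum over $i$ collapses into a single sum over all leaf distributions $(i_u)_{u=1}^G$ with $\sum_u i_u=\ell$.

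The key bookkeeping step---and the only place I expect real friction---is identifying the node-$v$ factor. Writing $I_v=\ell$, $I_{L_v}=i$, and $I_{R_v}=\ell-i$, the coefficient $\beta_i^{v,\ell}=\frac{\alpha_{i,\ell-i}^v}{\alpha_\ell^v}=\frac{\alpha_{I_{L_v},I_{R_v}}^v}{\alpha_{I_v}^v}$ is exactly the factor associated to $v$ in the target product. Since $\text{Int}(T_v)=\{v\}\sqcup\text{Int}(T_{L_v})\sqcup\text{Int}(T_{R_v})$, multiplying this factor by the two products supplied by the induction hypothesis yields $\prod_{w\in\text{Int}(T_v)}\frac{\alpha_{I_{L_w},I_{R_w}}^w}{\alpha_{I_w}^w}$, the claimed coefficient. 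One must also verify label consistency: the weight $I_w$ attached to any internal node $w$ equals the aggregate $\sum_{u} i_u$ over leaves below $w$, so that at the top $\sum_{u\in\text{leaves}(L_v)} i_u = i = I_{L_v}$ matches the weight the $gWDB$ at $v$ routed into the left subtree, and similarly on the right. Checking this agreement between the local split $(i,\ell-i)$ produced at $v$ and the aggregate leaf weights produced deeper in each subtree is the routine but slightly fiddly part; once it is in place, the inductive identity is immediate and the lemma follows. (A minor edge case, that $\alpha_\ell^v=0$ makes $\beta_i^{v,\ell}$ ill-defined, corresponds to the weight-$\ell$ component simply having no support and can be handled by restricting the sums to nonzero-norm branches.)
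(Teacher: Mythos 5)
Your proposal is correct and follows essentially the same route as the paper's proof: induction on the tree, applying the $gWDB$ at the root to split the weight as $(i,\ell-i)$, invoking the hypothesis on the two child subtrees over their disjoint registers, and merging the coefficients via $\text{Int}(T_v)=\{v\}\cup\text{Int}(T_{L_v})\cup\text{Int}(T_{R_v})$ together with $\beta_i^{v,\ell}=\alpha_{I_{L_v},I_{R_v}}^v/\alpha_{I_v}^v$. Your added remark on the $\alpha_\ell^v=0$ edge case is a small refinement the paper omits but does not change the argument.
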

The proof relies on induction on the depth of the tree. See Appendix~\ref{sec:RecursivegWDB} for details. A flow chart summarizing the gWDB block is presented in Fig.~\ref{fig:gWDB_tree_poster}.

\begin{figure}
  \centering
  \resizebox{\columnwidth}{!}{%
  \begin{tikzpicture}[
    node distance=8mm and 10mm,
    font=\footnotesize,
    io/.style        = {draw, fill=green!10, rounded corners, align=center, minimum height=7mm, inner sep=3pt, scale=1.3, font=\large},
    gWDB/.style      = {draw, fill=blue!10, rounded corners, align=center, minimum height=7mm, inner sep=3pt, scale=1.3, font=\large},
    leaf/.style      = {draw, fill=orange!10, rounded corners, align=center, minimum height=7mm, inner sep=3pt, scale=1.3, font=\large},
    arrow_label/.style = {above, sloped, midway, text width=2.2cm, align=center, yshift=1pt, font=\Large},
    thick_arrow/.style = {->, line width=0.7pt},
]

\node[gWDB] (root) {gWDB (Root $v_0$)\\$n$ qubits\\Distribute $\ell$};
\node[io, above=8mm of root] (input) {Initial State\\$\ket{0^{n-\ell}1^{\ell}}$};
\draw[thick_arrow] (input.south) -- (root.north);

\node[gWDB, below left=15mm and 12mm of root]  (L) {gWDB ($v_L$)\\$m$ qubits};
\node[gWDB, below right=15mm and 12mm of root] (R) {gWDB ($v_R$)\\$n-m$ qubits};
\draw[thick_arrow] (root.south west) -- (L.north) node[arrow_label]{$i$};
\draw[thick_arrow] (root.south east) -- (R.north) node[arrow_label]{$\ell - i$};

\node[leaf, below left=18mm and 2mm of L]  (LL) {Leaf $L_1$\\$n_1$ qubits\\Local $i_1$};
\node[leaf, below right=18mm and 2mm of L] (LR) {Leaf $L_2$\\$n_2$ qubits\\Local $i_2$};
\draw[thick_arrow] (L.south west) -- (LL.north) node[arrow_label]{$i_L$};
\draw[thick_arrow] (L.south east) -- (LR.north) node[arrow_label]{$\ell_L - i_L$};

\node[leaf, below left=18mm and 2mm of R]  (RL) {Leaf $L_3$\\$n_3$ qubits\\Local $i_3$};
\node[leaf, below right=18mm and 2mm of R] (RR) {Leaf $L_4$\\$n_4$ qubits\\Local $i_4$};
\draw[thick_arrow] (R.south west) -- (RL.north) node[arrow_label]{$i_R$};
\draw[thick_arrow] (R.south east) -- (RR.north) node[arrow_label]{$\ell_R - i_R$};

\node[io, below=25mm of $(LL)!0.5!(RR)$, minimum width=140mm] (output) {Intermediate Output\\[2pt]
    $\displaystyle\sum_{\sum i_u=\ell}\prod_{w\in\mathrm{Int}(T_v)}\frac{\alpha_{I_{L_w},I_{R_w}}^w}{\alpha_{I_w}^w}\bigotimes_{u=1}^G\ket{0^{n_u-i_u}1^{i_u}}$\quad\text{($I_w$ is the total weight on node $w$)}};

\draw[thick_arrow] (LL.south) -- (LL.south |- output.north);
\draw[thick_arrow] (LR.south) -- (LR.south |- output.north);
\draw[thick_arrow] (RL.south) -- (RL.south |- output.north);
\draw[thick_arrow] (RR.south) -- (RR.south |- output.north);

\end{tikzpicture}%
  }
  \caption{Conceptual diagram of the gWDB tree for Hamming weight distribution. An initial Hamming weight $\ell$ on $n$ qubits is recursively distributed across sub-registers down to leaf nodes.}
  \label{fig:gWDB_tree_poster}
\end{figure}
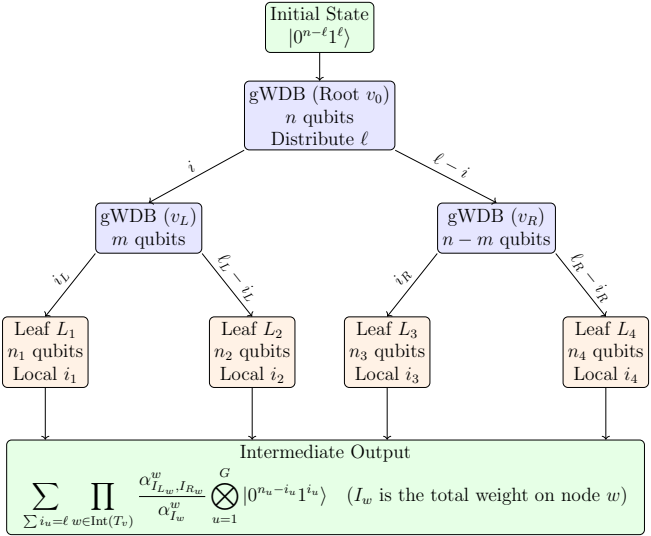

\subsection{Leaf Encoders}\label{sec: leaf encoders}
Using the appropriate HW-k encoder from \cite{PhysRevApplied.23.044014}, we can map the state $\ket{0^{n_u-i_u}1^{i_u}}\mapsto \ket{\psi_{i_u}}$ from Eq.~\eqref{eq:psi_i_u}, but this algorithm isn't designed to handle a superposition of states as an input. Define $E_{i_u}^u$, which acts on input states via
\begin{align}
    E_{i_u}^u\ket{0^{n_u-i_u}1^{i_u}} =& \sum_{g_u\in \mathcal{H}_{n_u,i_u}} \eta_{u,i_u}(g_u)\ket{g_u}\\ \eta_{u,i_u}(g_u) =& \frac{\gamma_{u,i_u}(g_u)}{\sqrt{\sum_{k_u}|\gamma_{u,i_u}(k_u)|^2}}
\end{align}
where $E_{i_u}^u$ implements the Hamming weight $i_u$ encoder on $n_u\leq k$ qubits from \cite{PhysRevApplied.23.044014} with target amplitudes $\eta_{u,i_u}(g)$ for $g\in\mathcal{H}_{u,i_u}$. We apply this unitary for all $0\leq i_u\leq \ell$, but note that $E_{i_u}^u$ acts nontrivially on $\ket{0^{n-j_u}1^{j_u}}$ when $j_u \geq i_u$. In particular, the gate decomposition of $E_{i_u}^u$ contains uniformly controlled rotations with $i_u$ controls, but this does not account for higher hamming weight states, for which these controlled gates would still act nontrivially. However, this only matters in cases where the fidelity of the algorithm is not 1. One can set $k = \lceil \frac{n}{2} \rceil$ to achieve a fidelity of 1. 

Considering that our input state to these leaf encoders is a superposition over all $0\leq i_u\leq \ell$, we need to add controls to these encoders. 
In particular, for a leaf $u$ and Hamming weight $i_u$, we would like to implement the unitary 
\begin{align}
E^{u} =& \prod_{i_u=0}^\ell \left[\ket{\phi_{i_u}}\bra{\phi_{i_u}}\otimes E_{i_u}^u + (\mathbb{I}_{n_u}-\ket{\phi_{i_u}}\bra{\phi_{i_u}})\otimes \mathbb{I}_{n_u}\right],\label{Leaf unitary}\\
\ket{\phi_{i_u}} =& \ket{0^{n_u-i_u}1^{i_u}}.
\end{align}
This leads to the result
\begin{theorem}\label{Algorithm}
    Let 
    \begin{equation*}\ket{\Psi} = \sum_{\substack{i_1,\dots, i_G \\ \sum i_u=\ell}}c(i_1,\dots, i_G)\bigotimes_{u=1}^G\left(\ket{0^{n_u-i_u}1^{i_u}}\right).
    \end{equation*}
Applying the unitary $U = \bigotimes_{u=1}^G E^u$ to $\ket{\Psi}$ yields the leaf-separable state $\ket{\psi} = \sum_{h(b) = \ell}\alpha_b\ket{b}$ provided $c(i_1,\dots, i_G)$ is obtained from a valid $I(b)$.
\end{theorem}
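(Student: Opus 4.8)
The plan is to verify that $U = \bigotimes_{u=1}^G E^u$ sends the intermediate state $\ket{\Psi}$ to $\ket{\psi}$ by tracking what happens term-by-term in the sum over weight distributions. The key observation is that $\ket{\Psi}$ is already a superposition indexed by valid distributions $I=(i_1,\dots,i_G)$, where the summand for a fixed $I$ is the product state $c(I)\bigotimes_u \ket{0^{n_u-i_u}1^{i_u}}$. Since $U$ factorizes as a tensor product of the leaf operators $E^u$, and each $E^u$ acts only on the qubits of leaf $u$, the plan is first to show that on a single product summand $E^u$ replaces the $u$-th factor $\ket{0^{n_u-i_u}1^{i_u}}$ by $\ket{\psi_{i_u}} = \sum_{g_u} \eta_{u,i_u}(g_u)\ket{g_u}$. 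This follows from the controlled structure in Eq.~\eqref{Leaf unitary}: the projector $\ket{\phi_{i_u}}\bra{\phi_{i_u}}$ selects exactly the string $\ket{0^{n_u-i_u}1^{i_u}}$, so on that string the branch $E_{i_u}^u$ fires and performs the defining encoding, while all other projector terms act as identity on it.

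First I would fix a single distribution $I$ and compute $U$ acting on $c(I)\bigotimes_u\ket{0^{n_u-i_u}1^{i_u}}$. Because the $E^u$ commute (disjoint supports), I can apply them one leaf at a time; each produces the factor $\ket{\psi_{i_u}}$, yielding $c(I)\bigotimes_u \ket{\psi_{i_u}} = c(I)\ket{\psi_I}$ by the leaf-separability definition. Summing over all valid $I$ then gives $\sum_I c(I)\ket{\psi_I} = \ket{\psi}$, which is exactly the decomposition of the target state established in Sec.~\ref{sec:algorithm}. The final step is to convert the separated amplitudes back to the original $\alpha_b$: combining \Cref{amplitude formula} with the definition $\eta_{u,i_u}(g_u)=\gamma_{u,i_u}(g_u)/\sqrt{\sum_{k_u}|\gamma_{u,i_u}(k_u)|^2}$, the coefficient of $\ket{b}=\ket{g_1}\otimes\cdots\otimes\ket{g_G}$ in $c(I)\bigotimes_u\ket{\psi_{i_u}}$ is precisely $\alpha_b$ (up to the reference phase $e^{i\,\text{arg}(\alpha_{b^*})}$ handled in the gWDB step).

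The main obstacle is the subtle ``cross-talk'' issue flagged just before the theorem: $E_{i_u}^u$ is built from uniformly controlled rotations with $i_u$ controls, so it does \emph{not} act trivially on strings $\ket{0^{n_u-j_u}1^{j_u}}$ with $j_u>i_u$. The careful point of the proof is therefore to argue that within a \emph{single} product summand the relevant leaf factor has a definite Hamming weight $i_u$, so only the matching projector $\ket{\phi_{i_u}}\bra{\phi_{i_u}}$ selects it and the higher-weight spurious action never sees a nonzero input on that branch. This is exactly where the hypothesis ``$c(i_1,\dots,i_G)$ is obtained from a valid $I(b)$'' is used, and it is also where the fidelity-$1$ caveat ($k=\lceil n/2\rceil$, so that the tree has a single gWDB and no unaccounted conditioning) enters. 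I would make this precise by showing that the ordered product over $i_u$ in Eq.~\eqref{Leaf unitary}, applied to a fixed-weight string, commutes past all non-matching factors as identity, leaving only the intended encoder; the remaining verification that the amplitudes match is then the routine bookkeeping supplied by Lemmas~\ref{product lemma} and \ref{amplitude formula}.
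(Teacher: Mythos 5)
Your proposal is correct and follows essentially the same route as the paper's proof in Appendix~\ref{sec:AlgorithmProof}: expand $U\ket{\Psi}$ term by term over the weight distributions, let each $E^u$ send $\ket{0^{n_u-i_u}1^{i_u}}$ to $\sum_{g_u}\eta_{u,i_u}(g_u)\ket{g_u}$, and reassemble the product of $\eta$'s with $c(I)$ into $\alpha_b$ via \Cref{amplitude formula}. Your extra care in arguing that the non-matching controlled factors of Eq.~\eqref{Leaf unitary} act as the identity on a fixed-weight leaf string is a point the paper's appendix proof passes over silently, but it does not alter the argument.
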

    See Appendix~\ref{sec:AlgorithmProof} for the proof, which relies on \Cref{amplitude formula}.

\subsection{Example of Algorithm~\ref{alg:state-prep-algorithm}}\label{sec:example}
Consider the state from Eqn.~\ref{eqn: leaf_state} which was shown to be leaf-separable:
{\small
\[
\ket{\psi_{\text{target}}} = \frac{1}{2\sqrt{2}} \left( \ket{0101} + \ket{0110} + \ket{1001} + \ket{1010}\right) + \frac{1}{\sqrt{2}} \ket{1100}.
\]
}We shall start with the initial state $\ket{0011}$ and use Algorithm~\ref{alg:state-prep-algorithm} to create $\ket{\psi_{\text{target}}}$.
First, we create a binary partition tree $T$ to partition the 4 qubits $\{0, 1, 2, 3\}$ into two sets of 2 qubits $\{0, 1\}$ and $\{2, 3\}$.
Next, we will apply the generalized weight distribution block to the root $v$, yielding:
\begin{align*}
    \ket{\Psi} &= gWDB_2^{4,2} \ket{0011} \\
    &= \sum_{i=0}^{\ell = 2} \beta_i^{v,2} \ket{0^{2-i}1^i} \ket{0^{4 - 2 - (2-i)} 1^{2-i}} \\
    &= \beta_0^{v,2} \ket{00}\ket{11} + \beta_1^{v,2} \ket{01}\ket{01} + \beta_2^{v,2} \ket{11}\ket{00} \\
    &= \sum_{\substack{i_1, i_2 \\ i_1 + i_2 = 2}} \frac{\alpha_{I_{L_v},I_{R_v}}^v}{\alpha_{I_v}^v}\bigotimes_{u=1}^{G=\lceil\frac{4}{2}\rceil = 2} \ket{0^{2-i_u} 1^{i_u}}\\
    &= \sum_{\substack{i_1, i_2 \\ i_1 + i_2 = 2}} \frac{\alpha_{i_1,i_2}^v}{\alpha_{2}^v}\bigotimes_{u=1}^{2} \ket{0^{2-i_u} 1^{i_u}},
\end{align*}
where 
\begin{align*}
    \beta_0^{v, 2} &= \frac{\alpha_{0, 2}^v}{\alpha_2^v} = 0,\;
    \beta_1^{v, 2} = \frac{\alpha_{1, 1}^v}{\alpha_2^v} = \frac{1}{\sqrt{2}},\;
    \beta_2^{v, 2} = \frac{\alpha_{2, 0}^v}{\alpha_2^v} = \frac{1}{\sqrt{2}}.
\end{align*}
More concretely, the intermediate state $\ket{\Psi}$ is
\[
    \ket{\Psi} = \frac{1}{\sqrt{2}} \ket{0101} + \frac{1}{\sqrt{2}} \ket{1100}.
\]
The gWDB angles to create $\ket{\Psi}$ from $\ket{1100}$ are
\begin{align*}
    \theta_{1,0} &= 2 \arctan\left(\frac{\sqrt{|\beta_1^{v,2}|^2 + |\beta_2^{v,2}|^2}}{|\beta_{0}^{v,2}|}\right) = \pi \\
    \theta_{1,1} &= 2 \arctan\left(\frac{\sqrt{|\beta_2^{v,2}|^2}}{|\beta_{1}^{v,2}|}\right) = \frac{\pi}{2}.
\end{align*}

As we can see, we have the correct amplitude on the $\ket{1100}$ state and we now need to evenly distribute the amplitude of $\ket{0101}$ to $\ket{0101}$, $\ket{0110}$, $\ket{1001}$ and $\ket{1010}$.
The next step of Algorithm~\ref{alg:state-prep-algorithm} is to apply $E^1$ and $E^2$ to the leaves of $\ket{\Psi}$, where:
\begin{align*}
    E^1 &= 
    \left[\ket{00}\bra{00} \otimes E_0^1 + (\mathbb{I}_2 - \ket{00}\bra{00}) \otimes \mathbb{I}_2\right] \\
    &\quad \cdot\left[\ket{01}\bra{01} \otimes E_1^1 + (\mathbb{I}_2 - \ket{01}\bra{01}) \otimes \mathbb{I}_2\right] \\
    &\quad \cdot\left[\ket{11}\bra{11} \otimes E_2^1 + (\mathbb{I}_2 - \ket{11}\bra{11}) \otimes \mathbb{I}_2\right] \\
    E^2 &= 
    \left[\ket{00}\bra{00} \otimes E_0^2 + (\mathbb{I}_2 - \ket{00}\bra{00}) \otimes \mathbb{I}_2\right] \\
    &\quad \cdot\left[\ket{01}\bra{01} \otimes E_1^2 + (\mathbb{I}_2 - \ket{01}\bra{01}) \otimes \mathbb{I}_2\right] \\
    &\quad \cdot\left[\ket{11}\bra{11} \otimes E_2^2 + (\mathbb{I}_2 - \ket{11}\bra{11}) \otimes \mathbb{I}_2\right]
\end{align*}
For $I=(1,1)$, \(\mathcal{G}_{1,1}=\{01,10\}\) and \(\mathcal{G}_{2,1}=\{01,10\}\).
Among the four nonzero strings \(\{0101,0110,1001,1010\}\), the lexicographically smallest is
\[
b^* = 0101 \quad\Rightarrow\quad g_1^*=01,\; g_2^*=01,
\]
with \(\alpha_{0101}=\tfrac{1}{2\sqrt{2}}\).
Using the target amplitudes \(\alpha_{0110}=\alpha_{1001}=\alpha_{1010}=\tfrac{1}{2\sqrt{2}}\), we get
\begin{align*}
\gamma_{1,1}(01) &= \frac{\alpha_{0101}}{\alpha_{0101}} = 1, &
\gamma_{1,1}(10) &= \frac{\alpha_{1001}}{\alpha_{0101}} = 1,\\
\gamma_{2,1}(01) &= \frac{\alpha_{0101}}{\alpha_{0101}} = 1, &
\gamma_{2,1}(10) &= \frac{\alpha_{0110}}{\alpha_{0101}} = 1.
\end{align*}
Normalizing within each leaf gives the leaf amplitudes used by the Hamming-weight encoders:
\[
\eta_{1,1}(01)=\eta_{1,1}(10)
= \frac{1}{\sqrt{2}},\quad
\eta_{2,1}(01)=\eta_{2,1}(10)
= \frac{1}{\sqrt{2}}.
\]
For $I=(2,0)$, \(\mathcal{G}_{1,2}=\{11\}\), \(\mathcal{G}_{2,0}=\{00\}\), and the only contributing basis state is
\[
b^* = 1100 \quad\Rightarrow\quad g_1^*=11,\; g_2^*=00,
\]
with amplitude \(\alpha_{1100}=\tfrac{1}{\sqrt{2}}\). Thus
\[
\gamma_{1,2}(11)=\frac{\alpha_{1100}}{\alpha_{1100}}=1,\qquad
\gamma_{2,0}(00)=\frac{\alpha_{1100}}{\alpha_{1100}}=1,
\]
so these branches are singletons and require no further leaf mixing.
In this case, $E^1_1$ and $E^2_1$ are only nontrivial unitaries, and the angles for these are:
\begin{align*}
    \tilde{\theta_1} &= 2 \arctan\left(\frac{\sqrt{|\eta_{1,1}(10)|^2}}{\eta_{1,1}(01)}\right) = \frac{\pi}{2}, \\
    \tilde{\theta_2} &= 2 \arctan\left(\frac{\sqrt{|\eta_{2,1}(10)|^2}}{\eta_{2,1}(01)}\right) = \frac{\pi}{2}.
\end{align*}

Then, by Theorem~\ref{Algorithm}, applying the unitary $E^1 \otimes E^2$ to state $\ket{\Psi}$, we have
\[
(E^1 \otimes E^2)\ket{\Psi} = \ket{\psi_{\text{target}}}.
\]

The circuit diagram for this example is in Fig.~\ref{fig:examplecircuit}.

\begin{figure} [H]
    \centering
    \resizebox{\columnwidth}{!}{%
    \input{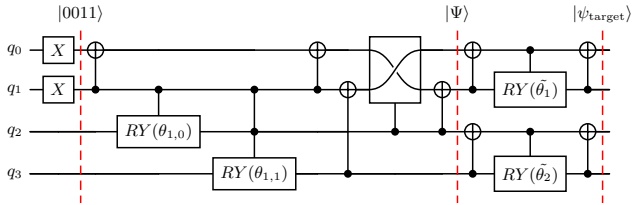}
    }
    \caption{Circuit diagram to create the target state $\ket{\psi_{\text{target}}}$ from Sec.~\ref{sec:example} using Algorithm~\ref{alg:state-prep-algorithm}.}
    \label{fig:examplecircuit}
\end{figure}

\subsection{Ancilla Optimization}
Note that for an ancilla-free method, we can uniformly control each RBS gate (Eq.~\eqref{eq:RBS}) within $E_{i_u}^u$ on the entire leaf register, meaning each RBS gate in the encoder has $\Theta(k)$ controls, but this adds a large overhead since a rotation gate with $\Theta(k)$ controls requires $\Theta(k)$ two-qubit gates to implement. Using $\Theta(G)$ ancilla qubits, however, we can eliminate this overhead and add only one extra control per RBS gate. The details of this optimization are below.

Recall the intermediate state $\ket{\Psi}$. Note that for leaf-separable states, while the overall system is entangled, the individual leaf registers separate, meaning $\ket{\psi_I}$ is a rank one vector in the subspace $\mathcal{H}_{n,\ell} = \bigotimes_{u=1}^G \mathcal{H}_{n_u,i_u}$. As a result, when applying the leaf encoders we can apply $\bigotimes_{u=1}^G E_{j}^u$ for every $0\leq j\leq \ell$ without regards for the order the $i_u$ show up in. This is implied in \cref{Leaf unitary} since $E^u$ is defined without care for the order the $i_u$ occur, with respect to some Gray code ordering, that is, an ordering of bitstrings such that two successive bitstrings only differ in a single bit. For example, $00, 01, 11, 10$ is a Gray code on length two bitstrings.

Due to this property, we apply the Hamming weight encoders $E_{j}^u$ in increasing order. Since each encoder requires a starting state of $\ket{0^{n-j}1^j}$, we deterministically know which state we require. As a result, for each leaf, we perform a $\ket{0}$-controlled CNOT gate controlled by the $(n-j)^{th}$ qubit in the leaf register onto an ancilla qubit. In every RBS gate for $E_{j}^u$, the ancilla qubit acts as an additional control, giving us the $\ket{1}$-controlled unitary $CE_{j}^u$. 

To see that this works, note that for $a < j$, CNOTs from previous $CE_a^u$ will ensure all states $\ket{\psi}\in\mathcal{H}_{n_u, a}$ are of the form $\ket{1}\otimes \ket{\psi}$ on the overall subsystem including the ancillary qubit. Moreover, $E_j^u$ acts trivially on $\mathcal{H}_{n_u,a}$, $CE_{j}^u$ acts trivially on all such $\ket{\psi}$. 

For $b>j$, we have not yet acted on the starting state $\ket{0^{n-b}1^b}$, but recall $E_{j}^u$ acts nontrivially on $\ket{0^{n-b}1^b}$. The overall state on this subsystem is $\ket{0}\otimes \ket{0^{n-b}1^b}$ since all previous $\ket{0}-$controlled CNOTs act trivially on $\ket{0^{n-b}1^b}$ so the extra control ensures $CE_j^u$ acts trivially on $\mathcal{H}_{n_u,b}$.

Since we use a single ancilla for each leaf $u$, we require $G = \lceil\frac{n}{k}\rceil$ ancilla qubits. Moreover, recall every ancilla-free leaf encoder $E_{i_u}^u$ contains $\binom{n_u}{i_u}-1$ $(k-1)-$controlled RBS gates. With a single ancilla qubit per leaf, the number of controls reduce from $\Theta(k)$ to $\Theta(i_u)$ per RBS gate, meaning $CE_{i_u}^u$ costs $\Theta(\binom{k}{i_u})$ gates, which is an improvement over the $\Theta(k\binom{n_u}{i_u})$ gates in $E_{i_u}^u$. In \Cref{sec:experiments}, we will see the resource differences between the ancilla and uniformly-controlled methods.

\subsection{When is the gWDB tree exact?}
In general, one notices that \Cref{recursive gwdb lemma} describes a state with coefficients $\prod_{w\in \text{Int}(T_v)}\frac{\alpha_{I_{L_w},I_{R_w}}^w}{\alpha_{I_w}^w}$ and these coefficients are not necessarily $c(i_1,\dots, i_G)$ as described for arbitrary leaf-separable states. Without ancilla qubits, the gWDB tree produces an approximation of a target state since in each branch of the tree, the weight on a node is not conditioned on the knowledge of the weight on the node's sibling(s), resulting in a loss of the mutual information of the target probability distribution.

The gWDB tree is an example of a Chow-Liu Tree \cite{1054142}, so there are known bounds on the information loss. It is not clear whether the output state $\ket{\Psi}$ is optimal since we cannot verify that the KL divergence between the target intermediate state and $\ket{\Psi}$ is minimal. In general, our state preparation is exact when $k = \frac{n}{2}$ for both real and complex target states and also $k = 1$ for real target states since there is nothing to condition on for these cases.

If one wants to exactly prepare $\ket{\Psi}$, notice by using stars-and-bars counting, the number of valid weight distributions for some $n,k,\ell$ is less than $\binom{n}{\ell}$, so one can apply the sparse fixed hamming weight algorithm from \cite{PhysRevApplied.23.044014} for ancilla-free state-prep. Future work can also involve finding an ancilla-based method to tighten this approximation--perhaps using the $\Theta(G)$ ancilla qubits before optimizing the leaf encoder step--or make this step deterministic. We showcase numerical simulations of the fidelity loss between the approximation $\ket{\Psi}$ and the target intermediate state in \Cref{sec:experiments}.

\subsection{Resource complexities and extensions to mixed Hamming weight states}
\begin{theorem}\label{Algorithm resource complexity}
\Cref{alg:state-prep-algorithm} can be implemented with no ancilla qubits and $O(k(\log\frac{n}{k} + 2^k))$ depth and $O(nk2^k)$ two-qubit gates. With $G = \lceil\frac{n}{k}\rceil$ ancillas, we can implement the algorithm with $O(k\log\frac{n}{k} + 2^k)$ depth and $O(n(k+ 2^k))$ two-qubit gates.
\end{theorem}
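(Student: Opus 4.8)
The plan is to account for the resource cost of each of the two stages of Algorithm~\ref{alg:state-prep-algorithm}---the gWDB tree and the leaf encoders---separately, and then combine them, treating the ancilla-free and ancilla-based cases in parallel. For the gWDB tree, I would argue that each $gWDB_k^{n_v,m_v}$ has essentially the same structure and cost as a standard $WDB$ from \cref{sec:dickestate}: it is built from uniformly controlled $RY$ rotations (plus controlled-phase gates for the complex case), so each block costs $O(k)$ depth and $O(k^2)$ two-qubit gates with no ancillas. Since the partition tree $T$ from \Cref{alg:partitiontree} has depth $O(\log\frac{n}{k})$ and each level contains internal nodes whose qubit sets partition the $n$ qubits, the blocks within a level can be applied in parallel, giving an aggregate gWDB-tree cost of $O(k\log\frac{n}{k})$ depth and $O(kn)$ two-qubit gates, exactly as in \cite{bartschi2022short}.

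\textbf{Next I would} analyze the leaf encoder stage, which is where the $2^k$ factors enter. Recall each leaf has $n_u \le k$ qubits, there are $G=\lceil\frac{n}{k}\rceil$ leaves, and $E^u = \prod_{i_u=0}^\ell E_{i_u}^u$ is a product of Hamming-weight-$i_u$ encoders. Each $E_{i_u}^u$ is a sequence of $\binom{n_u}{i_u}-1$ RBS gates arranged by an Ehrlich/Gray-code ordering. Summing over all weights, the number of RBS gates per leaf is $\sum_{i_u=0}^{k}\big(\binom{k}{i_u}-1\big) = O(2^k)$. In the ancilla-free case, each RBS gate carries $\Theta(k)$ uniform controls and hence costs $\Theta(k)$ two-qubit gates to decompose, giving $O(k\,2^k)$ gates per leaf and $O(nk2^k/k)\cdot k = O(nk2^k)$ total across $G$ leaves; the depth per leaf is $O(k\,2^k)$ but since the leaves act on disjoint registers they run in parallel, contributing $O(k\,2^k)$ depth. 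In the ancilla case, the argument in \cref{sec:algorithm} reduces the controls per RBS gate to $\Theta(i_u)$ at the cost of one CNOT per weight onto the leaf's single ancilla, so each RBS gate costs $\Theta(i_u)$ gates; summing $\sum_{i_u}i_u\binom{k}{i_u}=O(k2^{k-1})=O(k2^k)$ per leaf over $G$ leaves gives $O(nk2^k/k) = O(n2^k)$ encoder gates, which I would combine with the $O(kn)$ tree gates to get the stated $O(n(k+2^k))$.

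\textbf{Finally I would} assemble the depth bounds. In the ancilla-free case the tree contributes $O(k\log\frac{n}{k})$ and the parallel leaf encoders contribute $O(k2^k)$, and since these stages are sequential the total depth is $O(k\log\frac{n}{k} + k2^k) = O(k(\log\frac{n}{k}+2^k))$. In the ancilla case the reduced control count drops the encoder depth per leaf to $O(2^k)$ (each RBS now having $O(1)$ extra control overhead in depth), and adding the tree depth gives $O(k\log\frac{n}{k}+2^k)$; I would note the $G$ ancilla count matches the claim. \textbf{The main obstacle} I expect is bookkeeping the depth of the leaf encoders carefully: one must justify that the $O(2^k)$ RBS gates within a single $E^u$, though sequential on a $k$-qubit register, do not inflate the depth beyond the gate count (they cannot, since depth $\le$ gate count), and more delicately that parallelism across the $G$ disjoint leaves genuinely removes the factor of $G=n/k$ from the depth while leaving it in the total gate count---and that in the ancilla case the per-RBS control reduction from $\Theta(k)$ to $\Theta(i_u)$ legitimately shaves the factor of $k$ from both the depth and the gate count simultaneously, which requires care about how the uniformly controlled rotation decomposition scales with the number of controls.
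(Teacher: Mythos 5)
Your overall structure matches the paper's proof: cost the gWDB tree at $O(k)$ depth and $O(k^2)$ two-qubit gates per block (hence $O(k\log\frac{n}{k})$ depth and $O(kn)$ gates for the tree), cost the leaf encoders via the $O(\binom{k}{i})$ RBS gates per weight-$i$ encoder, exploit parallelism across the $G$ disjoint leaf registers for depth, and treat the ancilla case by reducing the per-RBS control overhead. The depth bounds and the ancilla-based gate count come out exactly as in the paper.

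The one step that does not hold together is the ancilla-free gate count. From your own premises --- $O(2^k)$ RBS gates per leaf, each costing $\Theta(k)$ two-qubit gates --- the per-leaf cost is $O(k2^k)$ and the total over $G=\lceil n/k\rceil$ leaves is $O\!\left(\frac{n}{k}\cdot k2^k\right)=O(n2^k)$. The expression ``$O(nk2^k/k)\cdot k = O(nk2^k)$'' inserts an extra factor of $k$ that is not derived from anything you wrote; it appears only to force agreement with the stated bound. Worse, under your accounting the ancilla-free and ancilla-based per-leaf costs are both $O(k2^k)$ (you compute $\sum_{i}i\binom{k}{i}=O(k2^k)$ for the latter), so your derivation never actually exhibits the factor-of-$k$ separation between $O(nk2^k)$ and $O(n(k+2^k))$ that the theorem attributes to the ancillas. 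The paper avoids this by a different bookkeeping: the base HW-$i_u$ encoder from \cite{PhysRevApplied.23.044014} already costs $O(k\binom{k}{i_u})$ CNOTs (the factor of $k$ coming from the S2 control lists), so summing over weights gives $O(k2^k)$ per leaf \emph{before} any uniform controls are added; the $\Theta(k)$ uniform-control overhead then multiplies this to $O(k^2 2^k)$ per leaf, i.e.\ $O(nk2^k)$ total, and it is precisely this multiplicative factor that the ancillas remove. Since the theorem asserts only upper bounds, your (tighter) $O(n2^k)$ count still implies $O(nk2^k)$ and the conclusion survives, but you should either state the total as $O(n2^k)$ and note it is subsumed by the claimed bound, or adopt the two-factor accounting (base control-list cost times uniform-control overhead) so that the ancilla-free and ancilla-based cases are visibly distinguished.
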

\begin{proof}
    Using depth calculations from \cite{bartschi2022short}, we note that $gWDB_{k}^{n,m}$ can be implemented in $O(k)$ depth and $O(k^2)$ two-qubit gates. From here, note that on the leaves, we apply $O(\ell)$ Hamming weight encoders on $k$ qubits. Since $\ell\leq k$, in the worst case $\ell = k$, we are performing arbitrary state preparation on the leaves. Each weight $i$ encoder has $O(\binom{k}{i})$ RBS gates, meaning our leaf encoders have depth $k\sum_{i=0}^\ell \binom{k}{i}$, where the extra $k$ comes with the extra controls. When $\ell = k$, this sum is $k2^k$, so the overall depth when ignoring the choice of $\ell$ and focusing on worst case bounds is $O(k(\log\frac{n}{k} + 2^k))$. 
    
    When looking at gate costs, \cite{PhysRevApplied.23.044014} states that arbitrary state preparation on $k$ qubits using the Hamming weight encoders costs $O(k2^k)$ CNOTs, and since we use $\Theta(k)$ extra two-qubit gates for the uniformly controlled rotations, we use $O(k^22^k)$ two-qubit gates per leaf. Since we have $O(\frac{n}{k})$ leaves, this step uses $O(nk2^k)$ two-qubit gates. Since the gWDB tree can be implemented in $O(nk)$ two qubit gates, overall, we can implement this algorithm using $O(nk2^k))$ two-qubit gates and no ancilla qubits.

    When using $G$ ancillas, we eliminate the need of using $\Theta(k)$ controls per rotation gate, and use $O(1)$ extra controls per rotation gate, eliminating a factor of $k$ from the above calculations since we've reduced to arbitrary state preparation from \cite{PhysRevApplied.23.044014}. That is, with $G$ ancillas, the algorithm has depth $O(k\log\frac{n}{k} + 2^k)$ and requires $O(n(k+ 2^k))$ two-qubit gates.
\end{proof}

To extend our algorithm to mixed Hamming weight states $\ket{\psi}\in \bigoplus_{\ell=0}^k \mathcal{H}_{n,\ell}$, the adaptation here is to simply use the symmetric state preparation circuit from \cite{bartschi2019deterministic}, which prepares the state $\sum_{\ell=0}^n \alpha_\ell\ket{0^{n-\ell}1^{\ell}}$ in $O(n)$ depth and $O(n)$ gates. Since the gWDB tree works for Hamming weights $k\leq \lfloor\frac{n}{2}\rfloor$, we cannot use a general state, but instead can prepare states $\ket{\psi}\in\bigoplus_{\ell=0}^{\lfloor\frac{n}{2}\rfloor} \mathcal{H}_{n,\ell}\subset(\mathbb{C}^2)^{\otimes n}$, which has dimension $O(2^{\frac{n}{2}})$.

Defining $\alpha_\ell = \sqrt{\sum_{i=0}^\ell |\alpha_{i,\ell-i}|^2}$, we see that this neatly integrates into the gWDB tree. At the root $r$, notice that when working with a single Hamming weight $\beta_{i}^{r,\ell} = \alpha_{i,\ell-i}^r$, and with the use of multiple Hamming weights we will have a nontrivial denominator $\alpha_{\ell}^r$. But since the input state to the gWDB tree will be $\sum_{\ell=0}^k \alpha^r_\ell\ket{0^{n-\ell}1^{\ell}}$, the $\alpha_\ell^r$ will cancel in the first layer of the tree. As a result, the rest of the tree continues the same, and we now have the following corollary:\\

\textbf{Corollary}: Given an input state $\sum_{\ell=0}^k \alpha_{\ell}^v\ket{0^{n-\ell}1^\ell}$ such that the target amplitudes in $\mathcal{H}_{n,\ell}$ separate across leaves for all $\ell\leq k$, the leaf-separable state preparation algorithm prepares a state in the subspace $\bigoplus_{\ell=0}^k \mathcal{H}_{n,\ell}\subset(\mathbb{C}^2)^{\otimes n}$. This state can be prepared with $O(k\log\frac{n}{k} + 2^k + k)$ depth, $O(nk+n2^k + k)$ two-qubit gates, and $O(\frac{n}{k})$ ancilla qubits.


\section{Numerical simulations}\label{sec:experiments}

In this section, we numerically simulate the leaf-separable state preparation algorithm to show resource usage and fidelity loss for arbitrary leaf-separable states. In the resource usages, we assume $k = \lceil\frac{n}{2}\rceil$ to simulate worst case asymptotics and to measure differences between exact state preparation algorithms. For the fidelity plots, we randomly generate 200 leaf-separable states with between 4 and 15 qubits and calculate the fidelity using the ancilla-free version of the algorithm, which has lower fidelity compared to using ancilla in the algorithm. 

\cref{fig:two-qubit-gates}, and \cref{fig:total-gates} display the two-qubit and total gate costs across our algorithm with and without ancillas, the arbitrary HW-K encoder \cite{PhysRevApplied.23.044014}, and arbitrary M\"{o}tt\"{o}nen state preparation \cite{Mottonen_2005TransformOfQuantStsUsingUnifContRot}, respectively. Note that while the leaf-separable state preparation algorithm requires more gates than the HW-k encoder and M\"{o}tt\"{o}nen approaches to prepare states on up to 13 qubits, note that there is a crossing point at 14 qubits. This indicates that the leaf-separable algorithm may be able to more efficiently prepare leaf-separable states with a high number of qubits. However, we were unable to simulate larger state preparation circuits due to limitations on computational resources. 
\begin{figure}
    \centering
    \includegraphics[width=\linewidth]{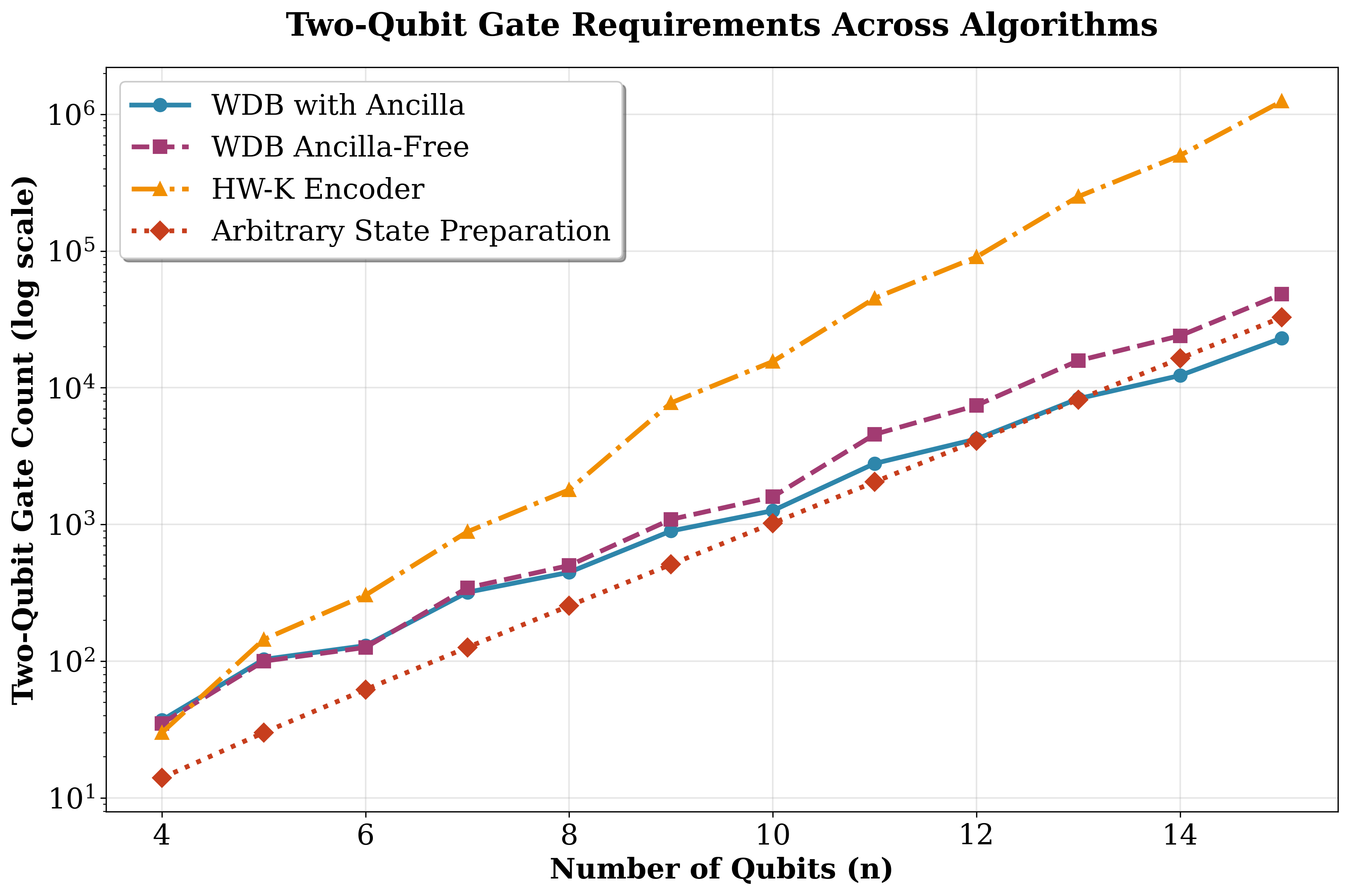}
    \caption{The 2-qubit gate count required to implement a HW-k encoder in the worst case when $k = \lceil \frac{n}{2} \rceil$ for $n$ between 4 and 15 qubits. The WDB with Ancilla line shows our algorithm performance with $2$ ancilla qubits.}
    \label{fig:two-qubit-gates}
\end{figure}
\begin{figure}
    \centering
    \includegraphics[width=\linewidth]{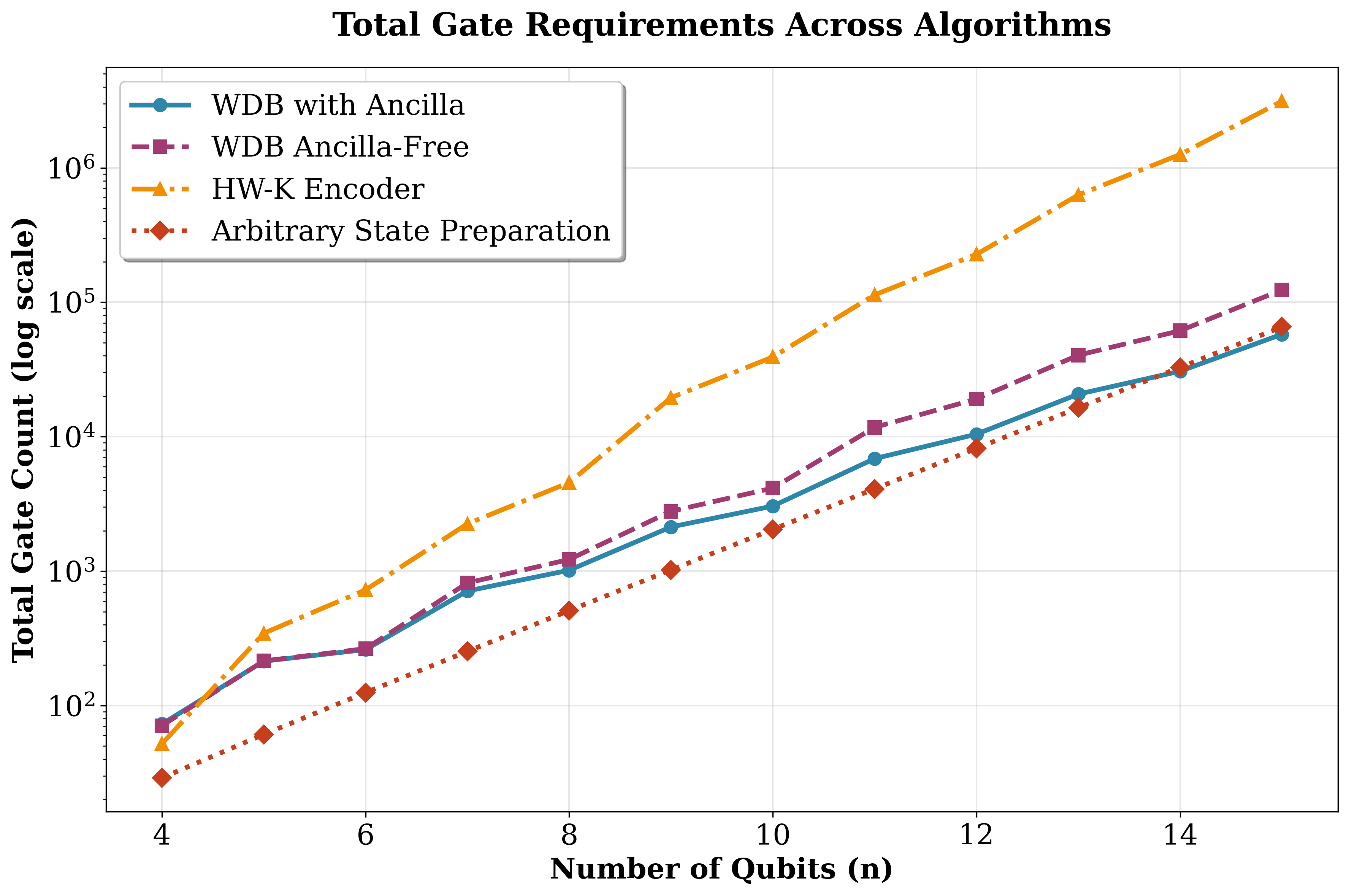}
    \caption{The total gate count required to implement a HW-k encoder in the worst case when $k = \lceil \frac{n}{2} \rceil$ for $n$ between 4 and 15 qubits.}
    \label{fig:total-gates}
\end{figure}

\begin{figure}
    \centering
    \includegraphics[width=\linewidth]{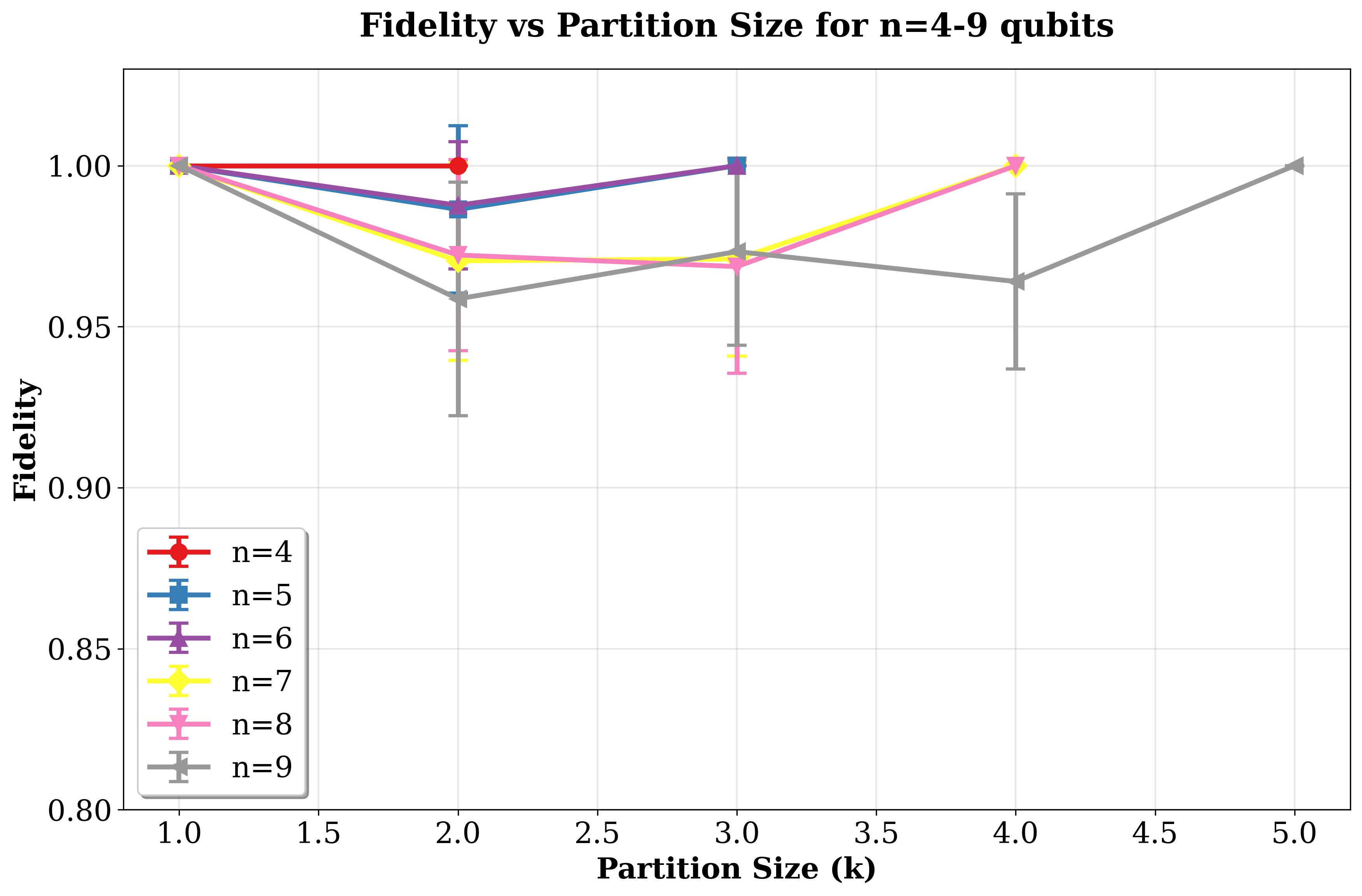}
    \caption{The average fidelity when using the leaf-separable state preparation algorithm to prepare 200 random states with real amplitudes on between 4 and 9 qubits.}
    \label{fig:fidelity-4-9}
\end{figure}

\begin{figure}
    \centering
    \includegraphics[width=\linewidth]{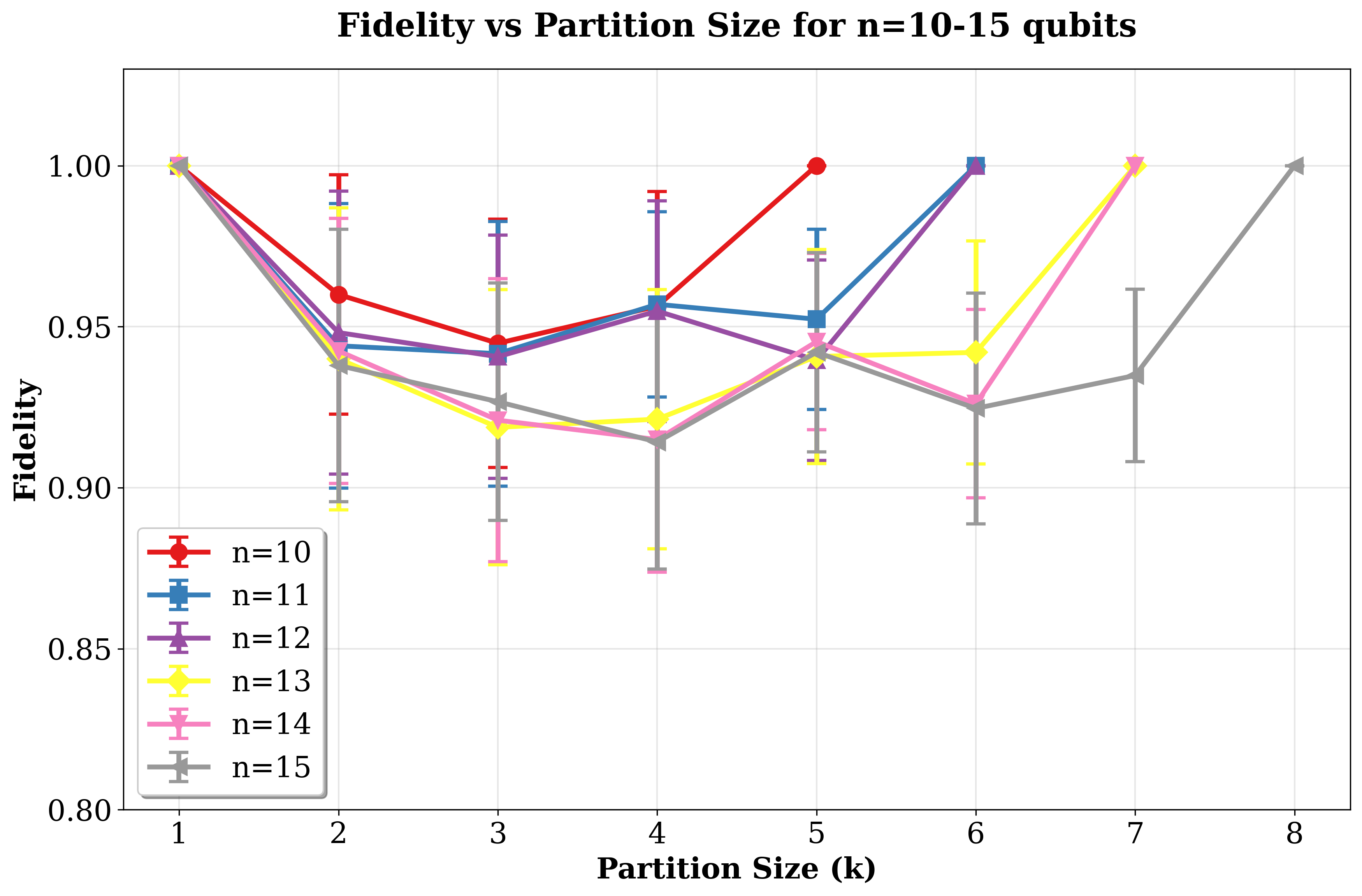}
    \caption{The average fidelity when using the leaf-separable state preparation algorithm to prepare 200 random states with real amplitudes on between 10 and 15 qubits.}
    \label{fig:fidelity-10-15}
\end{figure}

In Figs.~\ref{fig:fidelity-4-9}-\ref{fig:fidelity-10-15} we show the fidelity of the leaf-separable state preparation algorithm on states with between 4 and 15 qubits as a function of $k$. Note that the fidelity for preparing all states is, on average, above .9, and that the fidelity depends on both $n$ and $k$, with small and large values of $k$ resulting in higher fidelity, as expected. This increase in fidelity for large $k$, can in part, be attributed to the fact that the leaf-separable state preparation algorithm requires more gates as $k$ increases, as seen in Figs.~\ref{fig:total-gates}. It is also worth noting that the ancilla-free variation of the algorithm was used to generate these plots, as it has the lowest fidelity, so using ancilla in the algorithm will result in an even higher average fidelity.



\section{Discussion}\label{sec: Discussion}

We demonstrated an efficient leaf-separable state preparation algorithm for preparing initial quantum states that are either Dicke states or close approximations thereof. Our method leverages logarithmic-depth Dicke state preparation circuits combined with Hamming weight encoders to generate what we call leaf-separable states. Our results indicate that this approach accurately and efficiently prepares symmetric Dicke states, and generalizes to mixed Hamming weight superpositions by adapting the symmetric state preparation circuit and Hamming weight $k$ encoder circuits \cite{bartschi2019deterministic, PhysRevApplied.23.044014}. In contrast to general state preparation approaches, which require $O(2^n)$ CX gates, our construction achieves a circuit depth of $O(k\log\frac{n}{k} + 2^k)$ depth and uses $O(n(k+2^k))$ two-qubit gates, where $k < n$ denotes the size of the encoding tree required for the target state. 

However, a primary limitation of the approach is that it relies on the probability of measuring the weight distribution, $I(b)$ of each state $\ket{b}$. Thus, the approach is not deterministic and suffers from information loss. Exactly modeling this information loss as a function of the partition tree $T$ is required to find a regime in which the leaf-separable state preparation algorithm can be used with high fidelity. Developing a deterministic equivalent of the algorithm would circumvent the need for calculating the information loss. Other future directions of research include how to map a leaf-separable state to an arbitrary quantum state that has a similar number of nonzero amplitude basis states as the leaf-separable state and testing the leaf-separable state preparation algorithm on hardware for larger states. 


\section*{Author contributions}
SV developed the leaf-separable algorithm and wrote code. AW developed examples, checked the logical consistency of the definitions and proofs, and generated data for plots. NR developed examples and wrote code. MA developed the GitHub repository and organized and cleaned the code. RH developed the initial project idea and acquired funding.

All authors wrote, read, edited, and approved the final manuscript.

\section*{Acknowledgments}
N.R., S.V., and R.H. acknowledge NSF CNS-2244512. R.H. and M.A. acknowledge DE-SC0024290. R. H. and A. W. acknowledge NSF CCF 2210063.

\section*{Competing interests}
All authors declare no financial or non-financial competing interests. 

\section*{Code and Data Availability}
\label{sec:codeAndDataAvail}
The code and data for this research can be found at \url{https://github.com/Mostafa-Atallah2020/state-prep-reu2025}.

\appendix
\onecolumngrid
\newpage

\section{Proof of \Cref{product lemma}}\label{sec:productlemma}

  Suppose that for any valid weight distribution $I = (i_1,\dots, i_G)$, we have $\ket{\psi_I} = \bigotimes_{u=1}^G \ket{\psi_{i_u}}$. As a result, for a state $\ket{b} = \ket{g_1}\otimes \cdots \otimes \ket{g_G}$ with $I(b) = I$, we have $\alpha_b = c(I)\braket{g_1\vert \psi_{i_1}}\cdots \braket{g_G\vert \psi_{i_G}}$. Now, recalling $\alpha_b' = \frac{\alpha_b}{\alpha_{b^*}}$, we can use the splitting of $\alpha_b$ to see
    \begin{equation*}
        \frac{\alpha_b}{\alpha_{b^*}} = \frac{\braket{g_1\vert \psi_{i_1}}\cdots \braket{g_G\vert \psi_{i_G}}}{\braket{g_1^*\vert \psi_{i_1}}\cdots \braket{g_G^*\vert \psi_{i_G}}}
    \end{equation*}
    Recall that $\gamma_{u,i_u}(g_u) = \alpha_{g_1^*\cdots g_{u-1}^*g_ug_{u+1}^*\cdots g_G^*}'$. Using Eq.~\eqref{eq:psi_i_u}, we have
    \begin{equation*}
        \gamma_{u,i_u}(g_u) = \frac{\braket{g_1^*\vert \psi_{i_1}}\cdots \braket{g_{u-1}^*\vert \psi_{i_{u-1}}}\braket{g_u\vert \psi_{i_u}}\braket{g_{u+1}^*\vert \psi_{i_{u+1}}}\cdots\braket{g_G^*\vert \psi_{i_G}}}{\braket{g_1^*\vert \psi_{i_1}}\cdots \braket{g_G^*\vert \psi_{i_G}}} = \frac{\braket{g_u\vert \psi_{i_u}}}{\braket{g_u^*\vert \psi_{i_u}}}
    \end{equation*}
    As a result, we have $\prod_{u=1}^G\gamma_{u,i_u}(g_u) = \prod_{u=1}^G \frac{\braket{g_u\vert \psi_{i_u}}}{\braket{g_u^*\vert \psi_{i_u}}} = \alpha_b'$.

    Conversely, suppose $\alpha_b' = \prod_{u=1}^G\gamma_{u,i_u}(g_u)$, so that $\alpha_b = \alpha_{b^*}\prod_{u=1}^G\gamma_{u,i_u}(g_u)$. Note that we can represent $\ket{\psi_I} = \sum_{b: I(b) = I}\frac{\alpha_b}{c(I)}\ket{b}$. Breaking this down further, we have
    \begin{align*}
        \sum_{b: I(b) = I}\frac{\alpha_b}{c(I)}\ket{b} =& \frac{1}{c(I)}\sum_{g_1\in\mathcal{G}_{1, i_1}}\cdots \sum_{g_G\in\mathcal{G}_{G,i_G}}\alpha_{g_1\cdots g_G}\ket{g_1}\otimes \cdots \otimes \ket{g_G} \\
        =& \frac{1}{c(I)}\sum_{g_1\in\mathcal{G}_{1, i_1}}\cdots \sum_{g_L\in\mathcal{G}_{G,i_G}}\alpha_{b^*}\prod_{u=1}^G \gamma_{u,i_u}(g_u)\ket{g_1}\otimes \cdots \otimes \ket{g_G} \\
        =& \frac{\alpha_{b^*}}{c(I)}\bigotimes_{u=1}^G\left(\sum_{g_u\in\mathcal{G}_{u,i_u}}\gamma_{u,i_u}(g_u)\ket{g_u}\right) \\
        =& \frac{\alpha_{b^*}c'}{c(I)}\bigotimes_{u=1}^G\ket{\psi_{i_u}} 
    \end{align*}
    where $c'$ is the normalization constant from Eq.~\eqref{eq:psi_i_u}, giving a leaf factorization for each $\ket{\psi_I}$. \qed

\section{Proof of \Cref{amplitude formula}}\label{sec:AmplitudeProof}

Since $\ket{\psi_I} = \bigotimes_{i\in I}\ket{\psi_i}$, we know $\alpha_b' = \prod_{u=1}^G\gamma_{u,i_u}(g_u)$. As a result,
    \begin{equation}\label{eq: Lem III.3 expanded eq}
        e^{i\text{arg}(\alpha_{b^*})}c(I(b))\prod_{u=1}^G\frac{\gamma_{u,i_u}(g_u)}{\sqrt{\sum_{g_u'}|\gamma_{u,i_u}(g_u')|^2}} = c(I(b))\frac{\alpha_b}{|\alpha_{b^*}|}\prod_{u=1}^G\frac{1}{\sqrt{\sum_{g_u'}|\gamma_{u,i_u}(g_u')|^2}}
    \end{equation}
    Focusing on the product, note
    \begin{equation*}
        \prod_{u=1}^G\sqrt{\sum_{g_u'}|\gamma_{u,i_u}(g_u')|^2} = \sqrt{\sum_{g_1'}\cdots\sum_{g_G'}\prod_{u=1}^G|\gamma_{u,i_u}(g'_u)|^2} = \sqrt{\sum_{b': I(b') =I(b)}|\alpha_{b'}'|^2} = \frac{1}{|\alpha_{b^*}|}c(I(b))
    \end{equation*}
    Substituting this into Eq.~\eqref{eq: Lem III.3 expanded eq}, we have 
    \begin{equation*}
        c(I(b))\frac{\alpha_b}{|\alpha_{b^*}|}\prod_{u=1}^G\frac{1}{\sqrt{\sum_{g_u'}|\gamma_{u,i_u}(g_u')|^2}} = \alpha_b
    \end{equation*}
    as desired. \qed

\section{Proof of \Cref{recursive gwdb lemma}}\label{sec:RecursivegWDB}
We prove this via induction on the depth of the tree. For the base case of the tree having zero depth, the only node is a leaf and no gWDB unitaries are applied. As a result, the only choice of $i$ is $i=\ell$, so $c(\ell) = 1$ and $n_u = n$. Moreover, since no unitaries are applied, the output state should be the input state $\ket{0^{n-\ell} 1^\ell}$, which matches with $\ket{\Psi}$ for this case.
    Now, suppose that for a tree rooted at $v$, the claim holds for its children $L_v, R_v$. Applying $gWDB_{k}^{n_v,m_v}$ on an input $\ket{0^{n-\ell} 1^\ell}$ at $v$, we have
    \begin{equation*}
       gWDB_{k}^{n_v, m_v}\ket{0^{n-\ell} 1^\ell} = \sum_{i=0}^\ell \beta_{i}^{v,\ell}\ket{0^{m_v-i}1^{ i}}\ket{0^{n_v-m_v+i-\ell}1^{\ell-i}}. 
    \end{equation*}
    Noting that $L_v$ is the $m_v$ qubit sub-register and $R_v$ is the $n_v-m_v$ qubit sub-register, the induction hypothesis indicates that after completing the gWDB tree, we have the output state
    \begin{align*}
        \ket{\Psi} = \tiny{\sum_{i=0}^\ell \beta_{i}^{v,\ell}\left(\sum_{\substack{r_1,\dots, r_a,\\ \sum r_u = i}}\prod_{w\in \text{Int}(T_{L_v})}\frac{\alpha_{I_{L_w},I_{R_w}}^w}{\alpha_{I_w}^w}\bigotimes_{u=1}^a\ket{0^{n_u-r_u}1^{r_u}}\right) \left(\sum_{\substack{s_{a+1},\dots, s_G,\\ \sum s_u = \ell-i}} \prod_{w\in \text{Int}(T_{R_v})}\frac{\alpha_{I_{L_w},I_{R_w}}^w}{\alpha_{I_w}^w}\bigotimes_{u=a+1}^G\ket{0^{n_u-s_u}1^{s_u}}\right)}.
    \end{align*}
    Noting $\sum_{u=1}^a r_u +\sum_{u'=a+1}^G s_{u'} = \ell$, and $\text{Int}(T_{L_v})\cup \text{Int}(T_{R_v}) = \text{Int}(T_v)\setminus \{v\}
    $, this product can be expanded to have
    \begin{equation}\label{expanded induction hypothesis equation}
        \ket{\Psi} = \sum_{i=0}^\ell\beta_{i}^{v,\ell}\sum_{\substack{r_1,\dots, r_a\\ s_{a+1},\dots, s_G\\ \sum r_u = i\\
        \sum s_u = \ell-i}} \prod_{w\in\text{Int}(T)\setminus\{v\}} \frac{\alpha_{I_{L_w},I_{R_w}}^w}{\alpha_{I_w}^w}\bigotimes_{u=1}^a \ket{0^{n_u-r_u}1^{r_u}}\bigotimes_{u=a+1}^G\ket{0^{n_u-s_u}1^{s_u}}
    \end{equation}
    Since $\beta_{i}^{v,\ell} = \frac{\alpha_{i,\ell-i}^v}{\alpha_{\ell}^v}$ by definition and \Cref{expanded induction hypothesis equation} varies over all $i\leq \ell$, we can simplify it to arrive at 
    \begin{equation*}
        \ket{\psi} = \sum_{\substack{i_1,\dots, i_G\\ \sum i_u = \ell}}  \prod_{w\in\text{Int}(T)} \frac{\alpha_{I_{L_w},I_{R_w}}^w}{\alpha_{I_w}^w}\bigotimes_{u=1}^G \ket{0^{n_u-i_u}1^{i_u}}
    \end{equation*}
    as desired. \qed

\section{Proof of \Cref{Algorithm}}\label{sec:AlgorithmProof}
Note that
\begin{equation}\label{eq:result}
    U\ket{\Psi} = \left(\bigotimes_{u=1}^G E^u\right) \ket{\Psi} = \sum_{\substack{i_1,\dots, i_L \\ \sum i_u=\ell}}c(i_1,\dots, i_G)\bigotimes_{u=1}^GE^u\ket{0^{n_u-i_u}1^{i_u}}
\end{equation}
 where for a fixed $u$,
\begin{equation*}
    E^u\ket{0^{n_u-i_u}1^{i_u}} = \sum_{g_u}\eta_{u, i_u}(g_{u})\ket{g_{u}}.
\end{equation*}
Focusing on the main subsystem of $n$ qubits, observe that $\bigotimes_{u=1}^G\ket{g_{u}}$ is an $n$ qubit basis state with Hamming weight $\ell$ since $\sum_u i_u = \ell$. Now, Eq.~\eqref{eq:result} becomes
\begin{align*}
\sum_{\substack{i_1,\dots, i_G \\ \sum i_u=\ell}}c(i_1,\dots, i_G)\bigotimes_{u=1}^GE^u\ket{0^{n_u-i_u}1^{i_u}}=& \sum_{\substack{i_1,\dots, i_G \\ \sum i_u=\ell}}c(i_1,\dots, i_G)\bigotimes_{u=1}^G\left(\sum_{g_u}\eta_{u,i_u}(g_u)\ket{g_{u}}\right)\\
=& \sum_{\substack{i_1,\dots, i_G \\ \sum i_u=\ell}}c(i_1,\dots, i_G)\left(\sum_{g_1}\cdots\sum_{g_G}\bigotimes_{u=1}^G\eta_{u,i_u}(g_u)\ket{g_{u}}\right)\\
=& \sum_{\substack{i_1,\dots, i_G \\ \sum i_u=\ell}}c(i_1,\dots, i_G)\ket{\psi_I}\\
=&\sum_{b\in\mathcal{H}_{v,\ell}}\alpha_b\ket{b}.
\end{align*}
\qed

\renewcommand\refname{References Cited}
\bibliography{manuscript}

\begin{thebibliography}{10}
\providecommand{\url}[1]{#1}
\csname url@samestyle\endcsname
\providecommand{\newblock}{\relax}
\providecommand{\bibinfo}[2]{#2}
\providecommand{\BIBentrySTDinterwordspacing}{\spaceskip=0pt\relax}
\providecommand{\BIBentryALTinterwordstretchfactor}{4}
\providecommand{\BIBentryALTinterwordspacing}{\spaceskip=\fontdimen2\font plus
\BIBentryALTinterwordstretchfactor\fontdimen3\font minus \fontdimen4\font\relax}
\providecommand{\BIBforeignlanguage}[2]{{%
\expandafter\ifx\csname l@#1\endcsname\relax
\typeout{** WARNING: IEEEtran.bst: No hyphenation pattern has been}%
\typeout{** loaded for the language `#1'. Using the pattern for}%
\typeout{** the default language instead.}%
\else
\language=\csname l@#1\endcsname
\fi
#2}}
\providecommand{\BIBdecl}{\relax}
\BIBdecl

\bibitem{shaydulin2024evidence}
R.~Shaydulin, C.~Li, S.~Chakrabarti, M.~DeCross, D.~Herman, N.~Kumar, J.~Larson, D.~Lykov, P.~Minssen, Y.~Sun \emph{et~al.}, ``Evidence of scaling advantage for the quantum approximate optimization algorithm on a classically intractable problem,'' \emph{Science Advances}, vol.~10, no.~22, p. eadm6761, 2024.

\bibitem{daley2022practical}
A.~J. Daley, I.~Bloch, C.~Kokail, S.~Flannigan, N.~Pearson, M.~Troyer, and P.~Zoller, ``Practical quantum advantage in quantum simulation,'' \emph{Nature}, vol. 607, no. 7920, pp. 667--676, 2022.

\bibitem{bravyi2018quantum}
S.~Bravyi, D.~Gosset, and R.~K{\"o}nig, ``Quantum advantage with shallow circuits,'' \emph{Science}, vol. 362, no. 6412, pp. 308--311, 2018.

\bibitem{cerezo2022challenges}
M.~Cerezo, G.~Verdon, H.-Y. Huang, L.~Cincio, and P.~J. Coles, ``Challenges and opportunities in quantum machine learning,'' \emph{Nature computational science}, vol.~2, no.~9, pp. 567--576, 2022.

\bibitem{Mottonen_2005TransformOfQuantStsUsingUnifContRot}
M.~M\"{o}tt\"{o}nen, J.~J. Vartiainen, V.~Bergholm, and M.~M. Salomaa, ``Transformation of quantum states using uniformly controlled rotations,'' \emph{Quantum Info. Comput.}, vol.~5, no.~6, p. 467–473, sep 2005.

\bibitem{Shende_2006SynthOfQuantLogCirc}
V.~Shende, S.~Bullock, and I.~Markov, ``Synthesis of quantum-logic circuits,'' \emph{IEEE Transactions on Computer-Aided Design of Integrated Circuits and Systems}, vol.~25, no.~6, pp. 1000--1010, 2006.

\bibitem{Plesch_2011QSPWithUniversalGateDecomp}
\BIBentryALTinterwordspacing
M.~Plesch and i.~c.~v. Brukner, ``Quantum-state preparation with universal gate decompositions,'' \emph{Phys. Rev. A}, vol.~83, p. 032302, Mar 2011. [Online]. Available: \url{https://link.aps.org/doi/10.1103/PhysRevA.83.032302}
\BIBentrySTDinterwordspacing

\bibitem{Araujo_2021ADivAndConqAlgoForQSP}
\BIBentryALTinterwordspacing
I.~F. Araujo, D.~K. Park, F.~Petruccione, and A.~J. da~Silva, ``A divide-and-conquer algorithm for quantum state preparation,'' \emph{Scientific Reports}, vol.~11, no.~1, p. 6329, Mar 2021. [Online]. Available: \url{https://doi.org/10.1038/s41598-021-85474-1}
\BIBentrySTDinterwordspacing

\bibitem{Gui_2024_Spacetime-Eff}
\BIBentryALTinterwordspacing
K.~Gui, A.~M. Dalzell, A.~Achille, M.~Suchara, and F.~T. Chong, ``Spacetime-efficient low-depth quantum state preparation with applications,'' \emph{Quantum}, vol.~8, p. 1257, Feb. 2024. [Online]. Available: \url{http://dx.doi.org/10.22331/q-2024-02-15-1257}
\BIBentrySTDinterwordspacing

\bibitem{lacroix2020symmetry}
D.~Lacroix, ``Symmetry-assisted preparation of entangled many-body states on a quantum computer,'' \emph{Physical Review Letters}, vol. 125, no.~23, p. 230502, 2020.

\bibitem{aktar2022divide}
S.~Aktar, A.~B{\"a}rtschi, A.-H.~A. Badawy, and S.~Eidenbenz, ``A divide-and-conquer approach to dicke state preparation,'' \emph{IEEE Transactions on Quantum Engineering}, vol.~3, pp. 1--16, 2022.

\bibitem{kerppo2025minimizing}
O.~Kerppo, W.~Steadman, O.~Niemim{\"a}ki, and V.~Lahtinen, ``Minimizing entanglement entropy for enhanced quantum state preparation,'' \emph{arXiv preprint arXiv:2507.22562}, 2025.

\bibitem{creevey2023gasp}
F.~M. Creevey, C.~D. Hill, and L.~C. Hollenberg, ``Gasp: a genetic algorithm for state preparation on quantum computers,'' \emph{Scientific reports}, vol.~13, no.~1, p. 11956, 2023.

\bibitem{gleinig2021efficient}
N.~Gleinig and T.~Hoefler, ``An efficient algorithm for sparse quantum state preparation,'' in \emph{2021 58th ACM/IEEE Design Automation Conference (DAC)}.\hskip 1em plus 0.5em minus 0.4em\relax IEEE, 2021, pp. 433--438.

\bibitem{gonzales2025efficient}
A.~Gonzales, R.~Herrman, C.~Campbell, I.~Gaidai, J.~Liu, T.~Tomesh, and Z.~H. Saleem, ``Efficient sparse state preparation via quantum walks,'' \emph{npj Quantum Information}, vol.~11, no.~1, p. 143, 2025.

\bibitem{gaidai2025decomposition}
I.~Gaidai and R.~Herrman, ``Decomposition of sparse amplitude permutation gates with application to preparation of sparse clustered quantum states,'' \emph{arXiv preprint arXiv:2504.08705}, 2025.

\bibitem{de2022double}
T.~M. de~Veras, L.~D. da~Silva, and A.~J. da~Silva, ``Double sparse quantum state preparation,'' \emph{Quantum Information Processing}, vol.~21, no.~6, p. 204, 2022.

\bibitem{zhang2022quantum}
X.-M. Zhang, T.~Li, and X.~Yuan, ``Quantum state preparation with optimal circuit depth: Implementations and applications,'' \emph{Physical Review Letters}, vol. 129, no.~23, p. 230504, 2022.

\bibitem{gard2020efficient}
B.~T. Gard, L.~Zhu, G.~S. Barron, N.~J. Mayhall, S.~E. Economou, and E.~Barnes, ``Efficient symmetry-preserving state preparation circuits for the variational quantum eigensolver algorithm,'' \emph{npj Quantum Information}, vol.~6, no.~1, p.~10, 2020.

\bibitem{bond2025global}
L.~J. Bond, M.~J. Davis, J.~Min{\'a}{\v{r}}, R.~Gerritsma, G.~K. Brennen, and A.~Safavi-Naini, ``Global variational quantum circuits for arbitrary symmetric state preparation,'' \emph{Physical Review Research}, vol.~7, no.~2, p. L022072, 2025.

\bibitem{cruz2019efficient}
D.~Cruz, R.~Fournier, F.~Gremion, A.~Jeannerot, K.~Komagata, T.~Tosic, J.~Thiesbrummel, C.~L. Chan, N.~Macris, M.-A. Dupertuis \emph{et~al.}, ``Efficient quantum algorithms for ghz and w states, and implementation on the ibm quantum computer,'' \emph{Advanced Quantum Technologies}, vol.~2, no. 5-6, p. 1900015, 2019.

\bibitem{bartschi2019deterministic}
A.~B{\"a}rtschi and S.~Eidenbenz, ``Deterministic preparation of dicke states,'' in \emph{International Symposium on Fundamentals of Computation Theory}.\hskip 1em plus 0.5em minus 0.4em\relax Springer, 2019, pp. 126--139.

\bibitem{bartschi2022short}
------, ``Short-depth circuits for dicke state preparation,'' in \emph{2022 IEEE International Conference on Quantum Computing and Engineering (QCE)}.\hskip 1em plus 0.5em minus 0.4em\relax IEEE, 2022, pp. 87--96.

\bibitem{PhysRevApplied.23.044014}
\BIBentryALTinterwordspacing
R.~M. Farias, T.~O. Maciel, G.~Camilo, R.~Lin, S.~Ramos-Calderer, and L.~Aolita, ``Quantum encoder for fixed-hamming-weight subspaces,'' \emph{Phys. Rev. Appl.}, vol.~23, p. 044014, Apr 2025. [Online]. Available: \url{https://link.aps.org/doi/10.1103/PhysRevApplied.23.044014}
\BIBentrySTDinterwordspacing

\bibitem{stockton2004deterministic}
J.~K. Stockton, R.~Van~Handel, and H.~Mabuchi, ``Deterministic dicke-state preparation with continuous measurement and control,'' \emph{Physical Review A—Atomic, Molecular, and Optical Physics}, vol.~70, no.~2, p. 022106, 2004.

\bibitem{yuan2025depth}
P.~Yuan and S.~Zhang, ``Depth-efficient quantum circuit synthesis for deterministic dicke state preparation,'' \emph{arXiv preprint arXiv:2505.15413}, 2025.

\bibitem{prevedel2009experimental}
R.~Prevedel, G.~Cronenberg, M.~S. Tame, M.~Paternostro, P.~Walther, M.-S. Kim, and A.~Zeilinger, ``Experimental realization of dicke states of up to six qubits for multiparty quantum networking,'' \emph{Physical review letters}, vol. 103, no.~2, p. 020503, 2009.

\bibitem{luo2025optimal}
J.~Luo and L.~Li, ``Optimal circuit size for fixed-hamming-weight quantum states preparation,'' \emph{arXiv preprint arXiv:2508.17197}, 2025.

\bibitem{li2025preparation}
Y.~Li, G.~Tian, X.~He, and X.~Sun, ``Preparation of hamming-weight-preserving quantum states with log-depth quantum circuits,'' \emph{arXiv preprint arXiv:2508.14470}, 2025.

\bibitem{berry2024analyzing}
D.~W. Berry, Y.~Su, C.~Gyurik, R.~King, J.~Basso, A.~D.~T. Barba, A.~Rajput, N.~Wiebe, V.~Dunjko, and R.~Babbush, ``Analyzing prospects for quantum advantage in topological data analysis,'' \emph{PRX Quantum}, vol.~5, no.~1, p. 010319, 2024.

\bibitem{bartschi2020grover}
A.~B{\"a}rtschi and S.~Eidenbenz, ``Grover mixers for {QAOA}: Shifting complexity from mixer design to state preparation,'' in \emph{2020 IEEE International Conference on Quantum Computing and Engineering (QCE)}.\hskip 1em plus 0.5em minus 0.4em\relax IEEE, 2020, pp. 72--82.

\bibitem{brandhofer2022benchmarking}
S.~Brandhofer, D.~Braun, V.~Dehn, G.~Hellstern, M.~H{\"u}ls, Y.~Ji, I.~Polian, A.~S. Bhatia, and T.~Wellens, ``Benchmarking the performance of portfolio optimization with {QAOA},'' \emph{Quantum Information Processing}, vol.~22, no.~1, p.~25, 2022.

\bibitem{scursulim2025multiclass}
J.~V.~S. Scursulim, G.~M. Langeloh, V.~L. Beltran, and S.~Brito, ``Multiclass portfolio optimization via variational quantum eigensolver with dicke state ansatz,'' \emph{arXiv preprint arXiv:2508.13954}, 2025.

\bibitem{gustafson2023preparing}
E.~J. Gustafson, A.~C. Li, A.~Khan, J.~Kim, D.~M. Kurkcuoglu, M.~S. Alam, P.~P. Orth, A.~Rahmani, and T.~Iadecola, ``Preparing quantum many-body scar states on quantum computers,'' \emph{Quantum}, vol.~7, p. 1171, 2023.

\bibitem{even1973algorithmic}
S.~Even, ``Algorithmic combinatorics,'' \emph{(No Title)}, 1973.

\bibitem{1054142}
C.~Chow and C.~Liu, ``Approximating discrete probability distributions with dependence trees,'' \emph{IEEE Transactions on Information Theory}, vol.~14, no.~3, pp. 462--467, 1968.

\end{thebibliography}
\bibliographystyle{IEEEtran}

\end{document}